\DeclareMathOperator{\dist}{dist}
\newtheorem{nclaim}{Claim}
\newcommand{\NP}{{\sf NP}}
\newcommand{\PP}{{\sf P}}
\newcommand{\shc}{{\sc Surjective $H$-Colouring}}
\newcommand{\problemdef}[3]{
	\begin{center}
		\begin{boxedminipage}{.99\textwidth}
			\textsc{{#1}}\\[2pt]
			\begin{tabular}{ r p{0.8\textwidth}}
				\textit{~~~~Instance:} & {#2}\\
				\textit{Question:} & {#3}
			\end{tabular}
		\end{boxedminipage}
	\end{center}
}
\newcommand{\dee}{{\mathcal{D}}}
\newcommand{\disjun}{{\cup}}
\newcommand{\convexpath}[2]{
  [   
  create hullcoords/.code={
    \global\edef\namelist{#1}
    \foreach [count=\counter] \nodename in \namelist {
      \global\edef\numberofnodes{\counter}
      \coordinate (hullcoord\counter) at (\nodename);
    }
    \coordinate (hullcoord0) at (hullcoord\numberofnodes);
    \pgfmathtruncatemacro\lastnumber{\numberofnodes+1}
    \coordinate (hullcoord\lastnumber) at (hullcoord1);
  },
  create hullcoords
  ]
  ($(hullcoord1)!#2!-90:(hullcoord0)$)
  \foreach [
  evaluate=\currentnode as \previousnode using \currentnode-1,
  evaluate=\currentnode as \nextnode using \currentnode+1
  ] \currentnode in {1,...,\numberofnodes} {
    let \p1 = ($(hullcoord\currentnode) - (hullcoord\previousnode)$),
    \n1 = {atan2(\y1,\x1) + 90},
    \p2 = ($(hullcoord\nextnode) - (hullcoord\currentnode)$),
    \n2 = {atan2(\y2,\x2) + 90},
    \n{delta} = {Mod(\n2-\n1,360) - 360}
    in 
    {arc [start angle=\n1, delta angle=\n{delta}, radius=#2]}
    -- ($(hullcoord\nextnode)!#2!-90:(hullcoord\currentnode)$) 
  }
}
\let\prooforig\proof
\let\prooforigend\endproof
\renewenvironment{proof}{\prooforig}{\qed \prooforigend}
\begin{document}

\tikzset{
	every loop/.style={min distance=10mm,looseness=10}
}
\tikzstyle{problem} = [rectangle, rounded corners, minimum width=3cm, minimum height=1cm,text centered, draw=black]
\tikzstyle{arrow_back} = [thick,<-,>={Stealth[length=3mm]}]

\title{Surjective $H$-Colouring: New Hardness Results\thanks{Supported by the Research Council of Norway via the project ``CLASSIS'' and the Leverhulme Trust (RPG-2016-258).}}

\author{Petr A. Golovach$^1$ \and
	Matthew Johnson$^2$ \and
	Barnaby Martin$^2$\and \\ 
	Dani\"el Paulusma$^2$
	\and
    Anthony Stewart$^2$}

\institute{$^1$Department of Informatics, University of Bergen,
Bergen, Norway\\
\texttt{petr.golovach@ii.uib.no}\\
$^2$School of Engineering and Computing Sciences,
Durham University,\\
South Road, Durham, DH1 3LE, U.K.\\
\texttt{$\{$matthew.johnson2,barnaby.d.martin,daniel.paulusma,a.g.stewart$\}$@durham.ac.uk}}

\maketitle

\begin{abstract}
A homomorphism from a graph $G$ to a graph $H$ is a vertex mapping  
$f$ from the vertex set of $G$ to the vertex set of $H$
such
that there is an edge between vertices $f(u)$ and $f(v)$ of $H$ whenever there is an edge between vertices $u$ and $v$ of $G$.
The {\sc $H$-Colouring} problem is to decide if a graph~$G$ allows a homomorphism to a fixed graph~$H$.
We continue a study on a variant of this problem, namely
the \shc{} problem, which imposes the homomorphism to be vertex-surjective.
We build upon previous results and show that this problem is \NP-complete for every connected graph $H$ that has exactly two vertices with a self-loop as long as these two vertices are not adjacent. 
As a result, we can classify the computational complexity of \shc{} for every graph $H$ on at most four vertices. 
\end{abstract}

\section{Introduction}

The well-known {\sc Colouring} problem is to decide if the vertices of a given graph can be properly coloured with 
at most $k$ colours for some given integer $k$.
If we exclude $k$ from the input and assume it is fixed,
we obtain the {\sc $k$-Colouring} problem. 
A {\it homomorphism} from a graph $G=(V_G, E_G)$ to a graph~$H=(V_H, E_H)$ is a vertex mapping $f \, : \, V_G \rightarrow V_H$, such
that there is an edge between $f(u)$ and $f(v)$ in $H$
whenever there is an edge between $u$ and $v$ in $G$.
We observe that {\sc $k$-Colouring} is equivalent to the problem of asking if a graph allows a homomorphism to the complete graph $K_k$ on $k$ vertices.
Hence, a natural generalization of the {\sc $k$-Colouring} problem is the 
$H$-{\sc Colouring} problem, which asks if a given graph allows a homomorphism to an arbitrary fixed graph $H$. We call this fixed graph~$H$ the {\it target graph}.
Throughout the paper we consider undirected graphs with no multiple edges. We assume that an input graph~$G$ contains no vertices with self-loops (we call such vertices 
{\it reflexive}), whereas a target graph~$H$ may contain such vertices. 
We call $H$ {\it reflexive} if all its vertices are reflexive, and {\it irreflexive} if all its vertices are irreflexive.

For a survey on graph homomorphisms we refer the reader to the textbook of Hell and Ne\v{s}et\v{r}il \cite{HN04}. Here,
we will discuss the $H$-{\sc Colouring} problem, a number of its variants and their relations to each other. In particular, we will focus on the {\it surjective} variant: 
a homomorphism~$f$ from a graph $G$ to a graph~$H$ is {\it (vertex-)surjective} if $f$ is surjective, that is, if for every vertex $x\in V_H$ there exists at least one vertex~$u\in V_G$ with $f(u)=x$.

The computational complexity of $H$-{\sc Colouring} has been determined completely. The problem is trivial if $H$ contains a reflexive vertex~$u$ (we can map each vertex of the input graph to $u$). If $H$ has no reflexive vertices, then
the Hell-Ne\v{s}et\v{r}il dichotomy theorem \cite{HN90} tells us that $H$-Colouring is solvable in polynomial time if $H$ is bipartite and that it is \NP-complete otherwise.

The {\sc List $H$-Colouring} problem takes as input a graph $G$ and a function $L$ that assigns to each $u\in V_G$ 
a list $L(u)\subseteq V_H$. The question is whether $G$ allows a homomorphism~$f$ to the target $H$ with $f(u)\in L(u)$ for every $u\in V_G$.
Feder, Hell and Huang~\cite{FHH03} proved that  {\sc List $H$-Colouring} is polynomial-time solvable if $H$ is a 
bi-arc graph and \NP-complete otherwise (we refer to~\cite{FHH03} for the definition of a bi-arc graph).
A homomorphism $f$ from $G$ to an induced subgraph $H$ of $G$
is a {\it retraction} if $f(x)=x$ for every $x\in V_H$, and  
we say that $G$ {\it retracts to} $H$. 
A retraction from $G$ to
$H$ can be viewed as a list-homomorphism: choose
$L(u)=\{u\}$ if $u\in V_H$, and $L(u)=V_H$ if $u\in V_G\setminus V_H$.
The corresponding decision problem is called $H$-{\sc Retraction}.
The computational complexity of $H$-{\sc Retraction} has not yet been classified.
Feder et al.~\cite{FHJKN10} determined the complexity of the $H$-{\sc Retraction} problem whenever $H$ is a 
pseudo-forest (a graph in which every connected component has at most one cycle).
They also showed that $H$-{\sc Retraction} is \NP-complete if $H$ contains a connected component in which the reflexive vertices induce a disconnected graph.

As mentioned, 
we impose a (vertex-)surjectivity condition on the graph homomorphism.
Such a condition can be imposed locally or globally.  If we require a homomorphism~$f$ 
from a graph $G$ to a graph $H$ 
to be surjective when restricted to the 
open neighbourhood of every vertex $u$ of $G$, we say that $f$ is an $H$-{\it role assignment}. 
The corresponding decision problem is called $H$-{\sc Role Assignment} and its computational complexity has been fully classified~\cite{FP05}.
We refer to the survey of Fiala and Kratochv\'il~\cite{FK08} for further details on locally constrained homomorphisms and from here on only consider global surjectivity.

It has been shown 
that deciding whether a given graph $G$ allows a surjective homomorphism to a given graph $H$ is \NP-complete even if $G$ and $H$ both belong to one of the following graph classes: disjoint unions of paths; disjoint unions of complete graphs; trees; connected cographs; connected proper interval graphs; and connected split graphs~\cite{GLMP12}.  Hence it is natural, just as before, to fix $H$, which yields the following problem:
\problemdef{\shc{}}{a graph $G$.}{does there exist a surjective homomorphism from $G$ to $H$?}
We emphasize that we are considering vertex-surjectivity and that 
being vertex-surjective is a different
condition than being edge-surjective.
A homomorphism from a graph $G$ to a graph $H$ is called {\it edge-surjective} or a {\it compaction} if for any edge $xy\in E_H$ with $x\neq y$ there exists an edge $uv\in E_G$ with $f(u)=x$ and $f(v)=y$. Note that the edge-surjectivity condition does not hold for any self-loops $xx\in E_H$.
If $f$ is a compaction from $G$ to $H$, we say that $G$ {\it compacts} to $H$.
The corresponding decision problem is known as the $H$-{\sc Compaction} problem. A full classification of this problem is still wide open.
However partial results are known, for example when $H$ is a reflexive cycle,  an irreflexive cycle, or a graph on at most four vertices~\cite{Vi02,Vi04,Vi05}, or when $G$ is restricted to some special graph class~\cite{Vi13}.
Vikas also showed that whenever {\sc $H$-Retraction} is polynomial-time solvable, then so is {\sc $H$-Compaction}~\cite{Vi04}.
Whether the reverse implication holds is not known.
A complete complexity classification of \shc{} is also still open. Below we survey the known results.

We first consider irreflexive target graphs~$H$.
The {\sc Surjective $H$-Colouring} problem is \NP-complete for every such graph~$H$ if $H$ is non-bipartite, as observed by Golovach et al.~\cite{GPS12}. The straightforward reduction is from the corresponding {\sc $H$-Colouring} problem, which is \NP-complete due to the aforementioned Hell-Ne\v{s}et\v{r}il dichotomy theorem.
However, the complexity classifications of {\sc $H$-Colouring} and {\sc Surjective $H$-Colouring} do not coincide: there exist bipartite graphs $H$ for which \shc{} is \NP-complete,  for instance when $H$ is the graph obtained from a 6-vertex cycle to each of which vertices we add a path of length 3~\cite{BKM12},
or when $H$ is the 6-vertex cycle itself~\cite{Vi17}.

We now consider target graphs with at least one reflexive vertex.
Unlike the {\sc $H$-Colouring} problem, the presence of a reflexive vertex does not make the \shc{} problem trivial to solve.
We call a connected graph {\it loop-connected} if all its reflexive vertices induce a connected subgraph.
Golovach, Paulusma and Song \cite{GPS12} showed that if $H$ is a tree (in this context, a connected graph with no cycles of length at least~3) then \shc{} is polynomial-time solvable if $H$ is loop-connected and \NP-complete otherwise.  As such the following question is natural:

\smallskip
\noindent
{\it Is \shc{} \NP-complete for every connected graph $H$ that is not loop-connected?}

\smallskip
\noindent
The reverse statement is not true (if \PP $\neq$ \NP): \shc{} is \NP-complete when $H$ is the 4-vertex cycle $C_4^*$ with a self-loop in each of its vertices. This result has been shown by Martin and Paulusma~\cite{MP15} and independently by Vikas, as announced in~\cite{Vi13}. 
Recall also that \shc{} is \NP-complete if $H$ is irreflexive (and thus loop-connected) and non-bipartite. 

It is known that \shc{} is polynomial-time solvable whenever $H$-{\sc Compaction} is~\cite{BKM12}. Recall that $H$-{\sc Compaction} is polynomial-time solvable whenever $H$-{\sc Retraction} is~\cite{Vi04}.
Hence, for instance, the aforementioned result of Feder, Hell and Huang~\cite{FHH03} implies that \shc{} is polynomial-time solvable if $H$ is a bi-arc graph.  
We also recall that {\sc $H$-Retraction} is \NP-complete whenever $H$ is a connected graph that is not loop-connected~\cite{FHJKN10}. Hence, an affirmative answer to the above question would mean that for these target graphs~$H$ the complexities of $H$-{\sc Retraction},
$H$-{\sc Compaction} and \shc{} coincide. 

In Figure~\ref{shc:fig:comp-rel} we display the relationships between the different problems discussed. In particular,
it is a major open problem whether the computational complexities of $H$-{\sc Compaction}, $H$-{\sc Retraction} and \shc{} coincide for each target graph~$H$. 
Even showing this for specific cases, such as the case $H=C_4^*$, has been proven to be non-trivial.
If it is true, it would relate the \shc{} problem to  
a well-known conjecture of Feder and Vardi~\cite{FV98}, which states that 
the ${\cal H}$-{\sc Constraint Satisfaction} problem has a dichotomy when ${\cal H}$ is some fixed finite target structure and which is equivalent to conjecturing that $H$-{\sc Retraction} has a dichotomy~\cite{FV98}.
We refer to the survey of Bodirsky, Kara and Martin~\cite{BKM12} for more details on the \shc{} problem from a constraint satisfaction point of view.

\begin{figure}
\begin{centering}
\newcommand{\separation}{3mm}
\begin{tikzpicture}[problem/.style={draw,rectangle,rounded corners,minimum width=1cm,minimum height=0.5cm,font=\ssmall}]
\node[problem] (listHcolouring) {{\sc List $H$-Colouring}};
\node[problem, right=\separation of listHcolouring] (Hretraction)  {{\sc $H$-Retraction}};
\node[problem, right=\separation of Hretraction] (Hcompaction)  {{\sc $H$-Compaction}};
\node[problem, right=\separation of Hcompaction] (surjectiveHcolouring) {{\sc Surj $H$-Colouring}};
\node[problem, right=\separation of surjectiveHcolouring] (Hcolouring)  {{\sc $H$-Colouring}};
\draw[->, thick] (listHcolouring) to (Hretraction);
\draw[->, thick] (Hretraction) to (Hcompaction);
\draw[->, thick] (Hcompaction) to (surjectiveHcolouring);
\draw[->, thick] (surjectiveHcolouring) to (Hcolouring);
\end{tikzpicture}
\caption{Relations between $H$-{\sc Colouring} and its variants. An arrow from one problem to another indicates that the latter problem is polynomial-time solvable for a target graph~$H$ if the former is polynomial-time solvable for $H$. Reverse arrows do not hold for the leftmost and rightmost arrows,
as witnessed by the reflexive 4-vertex cycle for the rightmost arrow and by any reflexive tree that is not a reflexive interval graph for the leftmost arrow (Feder, Hell and Huang~\cite{FHH03} showed that the only reflexive bi-arc graphs are reflexive interval graphs).
It is not known if the reverse direction holds for the two middle arrows.}\label{shc:fig:comp-rel}
\end{centering}	
\end{figure}
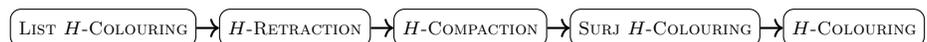

\subsection{Our Results}

We present further progress on the research question of whether \shc{} is \NP-complete for every connected graph $H$ that is not loop-connected. 
We first consider the case where the target graph $H$ is a connected graph with exactly two reflexive vertices that are non-adjacent. In Section~\ref{s-refver} we prove that \shc{} is indeed \NP-complete for every such target graph~$H$.
In the same section we slightly generalize this result by showing that it holds even if the reflexive vertices of $H$ can be partitioned into two non-adjacent sets of twin vertices. This enables us to 
classify in Section~\ref{s-four} the computational complexity of \shc{} for every graph $H$ on at most four vertices, just as Vikas~\cite{Vi05} did for the {\sc $H$-Compaction} problem. 
A classification of {\sc Surjective $H$-Colouring} for target graphs~$H$ on at most four vertices has also been announced by Vikas in~\cite{Vi13}. As we will illustrate for one particular case,
it is interesting to note that \NP-hardness proofs for {\sc $H$-Compaction} of~\cite{Vi05} may lift to \NP-hardness for {\sc Surjective $H$-Colouring}.
However, this is not true for the reflexive cycle $C^*_4$, where a totally new proof was required.

\subsection{Future Work}

To conjecture a dichotomy of \shc{} between \PP\ and \NP-complete seems still to be difficult.
Our first goal is to prove 
that \shc{} is \NP-complete for every connected graph $H$ that is not loop-connected. However, doing this via using our current techniques does not seem straightforward  
and we may need new hardness reductions. 
Another way forward is to prove polynomial equivalence between the three problems \shc{}, $H$-{\sc Compaction} and $H$-{\sc Retraction}. 
However, completely achieving this goal also seems far from trivial.
Our classification for target graphs $H$ up to four vertices does show such an equivalence for these cases (see Section~\ref{s-four}). 

\section{Two Non-Adjacent Reflexive Vertices}\label{s-refver}

We say that a graph is  {\it $2$-reflexive} if it contains exactly $2$ reflexive vertices {\it that are non-adjacent}. 
In this section we will prove that \shc{} is \NP-complete whenever $H$ is connected and 2-reflexive. The problem is readily seen to be in \NP. Our \NP-hardness reduction uses similar ingredients as the reduction of Golovach, Paulusma and Song~\cite{GPS12} for proving \NP-hardness when $H$ is a tree that is not loop-connected.
There are, however, a number of differences.
For instance, we will reduce from a factor cut problem instead of the less general matching cut problem used in~\cite{GPS12}. We will explain these two problems and prove \NP-hardness for the former one in 
Section~\ref{s-factor}. Then in Section~\ref{s-hard} we give our hardness reduction, and in Section~\ref{s-refcl} we extend our result to be valid 
for target graphs~$H$ with more than two reflexive vertices as long as these reflexive vertices can be partitioned into two non-adjacent sets of  twin vertices. 

\subsection{Factor Cuts}\label{s-factor}

Let $G=(V_G, E_G)$ be a connected graph.  For $v \in V_G$ and $E \subseteq E_G$, let $d_E(v)$ denote the number of edges of $E$ incident with $v$.
For a partition $(V_1,V_2)$ of $V_G$, let $E_G(V_1,V_2)$ denote the set of edges between $V_1$ and $V_2$ in $G$.  

Let $i$ and $j$ be positive integers, $i\leq j$.  Let $(V_1,V_2)$ be a partition of $V_G$ and let $M=E_G(V_1,V_2)$.  Then $(V_1,V_2)$ is an \emph{$(i,j)$-factor cut} of $G$ if, for all $v \in V_1$, $d_M(v) \leq i$, and, for all $v \in V_2$, $d_M(v) \leq j$.  
Observe that if a vertex $v$ exists with degree at most $j$, then there is a trivial $(i, j)$-factor cut $(V \setminus \{v\}, \{v\})$.
Two distinct vertices $s$ and $t$ in $V_G$ are {\it $(i,j)$-factor roots} of $G$ if, for each $(i,j)$-factor cut $(V_1,V_2)$ of $G$, $s$ and $t$ belong to different parts of the partition and, if $i < j$, $s \in V_1$ and $t \in V_2$ (of course, if $i=j$, we do not require the latter condition as $(V_2,V_1)$ is also an $(i,j)$-factor cut).
We note that when no $(i,j)$-factor cut exists, every pair of vertices is a pair of $(i,j)$-factor roots.  We define the following decision problem.

\problemdef{$(i,j)$-Factor Cut with Roots}{a connected graph $G$ with roots $s$~and~$t$.}{does $G$ have an $(i,j)$-factor cut?}
We emphasize that the $(i,j)$-factor roots are given as part of the input.  That is, the problem asks whether or not an $(i,j)$-factor cut $(V_1,V_2)$ exists, but we know already that if it does, 
then $s$ and $t$ belong to different parts of the partition.
That is, we actually define {\sc $(i,j)$-Factor Cut with Roots} to be a promise problem in which we assume that if an $(i,j)$-factor cut exists then it has the property that $s$ and $t$ belong to different parts of the partition. The promise class may not itself be polynomially recognizable but one may readily find a subclass of it that is polynomially recognizable and includes all the instances we need for NP-hardness.
In fact this will become clear when reading our proof but we refer also to~\cite{GPS12} where such a subclass is given for the case $(i,j)=(1,1)$.
A $(1,1)$-factor cut $(V_1,V_2)$ of $G$ is also known as a \emph{matching cut}, as no two edges in $E_G(V_1,V_2)$ have a common end-vertex, that is,
$E_G(V_1,V_2)$ is a {\it matching}.  Similarly {\sc $(1,1)$-Factor Cut with Roots} is known as {\sc Matching Cut with Roots} and was proved NP-complete by  Golovach, Paulusma and Song~\cite{GPS12} (by making an observation about the proof of the result of Patrignani and Pizzonia~\cite{PP01} that deciding whether or not any given graph has a matching cut is NP-complete).

We will prove the NP-completeness of {\sc $(i,j)$-Factor Cut with Roots} after first presenting a helpful lemma (a {\it clique} is a subset of vertices of $G$ that are pairwise adjacent to each other).

\begin{lemma} \label{lem:ijcliques}
Let $i$, $j$ and $k$ be positive integers where $i \leq j$ and $k>i+j$. Let $G$ be a graph that contains a clique $K$ on $k$ vertices.  Then, for every $(i,j)$-factor cut $(V_1,V_2)$ of $G$, either $V_K \subseteq V_1$ or $V_K \subseteq V_2$.
\end{lemma}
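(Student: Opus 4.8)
The plan is to argue by contradiction, exploiting the fact that a clique forces many crossing edges on any vertex that is separated from the rest of the clique. Suppose that some $(i,j)$-factor cut $(V_1,V_2)$ splits $K$, and set $A = V_K \cap V_1$ and $B = V_K \cap V_2$, so that by assumption both $A$ and $B$ are non-empty. Since $K$ is a clique, every vertex of $A$ is adjacent to every vertex of $B$, and all such edges run between $V_1$ and $V_2$; hence all $|A|\cdot|B|$ of these edges lie in $M = E_G(V_1,V_2)$.

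Next I would read off the two degree constraints that the factor cut imposes. Fix any $v \in A \subseteq V_1$. All of its neighbours in $B$ contribute to $d_M(v)$, so $d_M(v) \geq |B|$, and the $(i,j)$-factor cut condition on $V_1$ gives $d_M(v) \leq i$; therefore $|B| \leq i$. Symmetrically, fixing any $w \in B \subseteq V_2$ yields $|A| \leq d_M(w) \leq j$, so $|A| \leq j$. Combining these, and using that $A$ and $B$ partition $V_K$, we obtain $k = |A| + |B| \leq i + j$, which contradicts the hypothesis $k > i+j$. Consequently no such split exists, so $V_K \subseteq V_1$ or $V_K \subseteq V_2$, as claimed.

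This is a short counting argument rather than a deep one, so there is no serious obstacle; the only point requiring care is to make sure both $A$ and $B$ are taken to be non-empty (which is exactly what ``$K$ is split'' means) so that the two degree inequalities are genuinely available, and to correctly pair the side of the partition with the correct bound ($i$ for $V_1$, $j$ for $V_2$). It is worth observing that the argument does not use $i \leq j$ and treats the two sides symmetrically up to the labelling of the bounds; the ordering $i \leq j$ will only matter later when these cliques are used to pin roots to prescribed sides of the cut.
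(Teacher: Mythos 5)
Your proof is correct and is essentially the paper's own argument: both assume the clique is split, pick a vertex on each side, use the clique edges crossing the cut to lower-bound the crossing degrees, and derive $k \leq i+j$, contradicting $k > i+j$. The only cosmetic difference is that the paper adds the two degree bounds ($d_M(v_1)+d_M(v_2) \geq k$ versus $\leq i+j$) while you bound $|A|$ and $|B|$ separately before summing; your closing remark that $i \leq j$ is never used also holds for the paper's proof.
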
  
\begin{proof}
If the lemma is false, then for some $(i,j)$-factor cut $(V_1,V_2)$, we can choose $v_1 \in V_1 \cap V_K$ and $v_2 \in V_2 \cap V_K$.  Let $M=E_G(V_1,V_2)$. Since every vertex in $V_1 \cap V_K$ is linked by an edge of $M$ to $v_2$ and every vertex in $V_2 \cap V_K$ is linked by an edge of $M$ to $v_1$, we have $d_M(v_1)+d_M(v_2) \geq k >i+j$, contradicting the definition of an $(i,j)$-factor cut.
\end{proof}
  
\begin{theorem}
\label{shc:the_fc_theorem}
Let $i$ and $j$ be positive integers, $i\leq j$. Then {\sc $(i,j)$-Factor Cut with Roots} is \NP-complete.
\end{theorem}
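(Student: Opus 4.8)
Membership in \NP\ is immediate: a nondeterministic algorithm guesses the partition $(V_1,V_2)$ and checks in polynomial time that $d_M(v)\le i$ for all $v\in V_1$ and $d_M(v)\le j$ for all $v\in V_2$, where $M=E_G(V_1,V_2)$. For hardness I would reduce from {\sc $(1,1)$-Factor Cut with Roots}, that is, {\sc Matching Cut with Roots}, which is \NP-complete by Golovach, Paulusma and Song. So let $(G,s,t)$ be an instance of the latter, and build a graph $G'$ with two distinguished vertices that admits an $(i,j)$-factor cut precisely when $G$ has a matching cut (which, as $s,t$ are matching roots, automatically separates $s$ and $t$).

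The engine of the reduction is an edge gadget that converts the asymmetric budgets $(i,j)$ back into the matching condition. I would replace every edge $uv$ of $G$ by $i+j$ internally disjoint paths of length two between $u$ and $v$ (equivalently a copy of $K_{2,i+j}$ with $u,v$ on the small side). The point is a local degree count: if $u$ and $v$ lie on the same side then all $i+j$ midpoints can be placed there and the gadget creates no cut edges, whereas if $u$ and $v$ are separated, say $u\in V_1$ and $v\in V_2$, then, writing $a$ for the number of midpoints in $V_2$, feasibility demands $a\le i$ (the load at $u$) and $(i+j)-a\le j$ (the load at $v$), forcing $a=i$. Hence a separated edge uses \emph{exactly} $i$ of the budget at its $V_1$-endpoint and $j$ at its $V_2$-endpoint. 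Since these are the full budgets of the single vertices $u$ and $v$, no vertex can be the endpoint of two separated edges, so the separated edges of $G$ form a matching; conversely any matching cut of $G$ extends to an $(i,j)$-factor cut of $G'$ by splitting each gadget in the forced way. This gives the equivalence between $(i,j)$-factor cuts of $G'$ and matching cuts of $G$.

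Two further points must be settled with the help of Lemma~\ref{lem:ijcliques}. First, when $j\ge 2$ the naive gadget admits \emph{spurious} cuts: a midpoint (of degree two) can be pushed into $V_2$ while both its endpoints stay in $V_1$, and such a cut need not separate $s$ and $t$, which would destroy the property that makes $s$ and $t$ valid roots. I would eliminate these by attaching cliques of size $k>i+j$, which by the lemma are confined to a single side, tied to the endpoints so that every vertex of $G'$ has degree exceeding $j$ and so that no gadget-internal vertex can leave the side of its endpoints without overrunning some budget; this removes both the trivial cuts created by low-degree vertices and the spurious cuts, so that $s$ and $t$ become genuine $(i,j)$-factor roots. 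Second, when $i<j$ the roots must also satisfy $s\in V_1$ and $t\in V_2$. As the high-budget side $V_2$ absorbs everything the low-budget side does, one can only \emph{force} a vertex onto $V_2$: I would do this by introducing a helper vertex pinned (through a clique) to the side of $t$ and joining a clique of size $k$ fully to $s$ together with $i+1$ edges from that clique to the helper; given that $s$ and $t$ are separated, placing them with $t$ on the $i$-side overloads the helper, so $t$ is pinned to $V_2$ and $s$ to $V_1$, while keeping the helper's budget separate from that of $t$.

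The main obstacle is the interplay of these demands. Forcing a separated edge to consume the full \emph{asymmetric} budgets $i$ and $j$ at its two endpoints wants the midpoints free to move between the two sides, while ruling out spurious cuts wants them anchored; reconciling this—without either destroying the forced split (which would stop edges being cut at all) or admitting a detaching cut (which would break the roots promise)—is the delicate part, and it is exactly here that Lemma~\ref{lem:ijcliques} does the heavy lifting. Once the gadgets are calibrated so that the only $(i,j)$-factor cuts of $G'$ are those induced by matching cuts of $G$, all of them separating $s$ and $t$ with the correct orientation when $i<j$, the reduction is complete and runs in polynomial time, establishing \NP-completeness.
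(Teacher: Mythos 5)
Your reduction has a genuine gap at exactly the point you flag as ``the delicate part'', and it is not something Lemma~\ref{lem:ijcliques} can repair. With the $K_{2,i+j}$ gadget alone, whenever $j\geq 2$ (which covers every case left after $(1,1)$) the graph $G'$ \emph{always} has an $(i,j)$-factor cut: a single midpoint has degree $2\leq j$, so placing it alone in $V_2$ and everything else in $V_1$ satisfies all constraints. Hence $G'$ has a factor cut even when $G$ has no matching cut, and this spurious cut does not separate $s$ and $t$, so the roots promise for the constructed instance (the analogue of condition~(1) in the paper's proof) fails too. Your proposed repair --- anchoring gadget vertices to cliques of size $k>i+j$ --- is never instantiated, and the natural instantiations all fail. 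If a midpoint is fully joined to a large clique, it is pinned to whichever side that clique occupies; pinning it to the side of $u$ or of $v$ predetermines, per edge, which endpoint plays the $i$-role and which the $j$-role in a separated cut, an orientation that cannot be known in advance when $i<j$ (for an edge $uv$ with $u,v\notin\{s,t\}$, different matching cuts of $G$ may separate it in either direction, so any fixed choice breaks the forward direction). Pinning all $i+j$ midpoints to one side makes separated edges infeasible outright (the far endpoint would incur load $i+j$), while leaving any midpoint unpinned re-admits the spurious cut. Lemma~\ref{lem:ijcliques} only guarantees that large cliques do not split; it cannot make a midpoint's anchoring conditional on whether its two endpoints lie on the same side, which is what your gadget needs.

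The paper sidesteps this trap by using no per-edge gadgets at all. It attaches one large clique $K^s$ completely to $s$ (and, when $i\geq 2$, a second clique $K^t$ completely to $t$) and pre-loads every original vertex with $j-1$ edges into $K^s$ and $i-1$ edges into $K^t$, arranged so that clique vertices have at most one such extra neighbour. Since the cliques cannot split and sit on the sides of $s$ and $t$, every vertex of $V_1$ already spends $i-1$ of its budget on edges to $K^t$ and every vertex of $V_2$ spends $j-1$ on edges to $K^s$; each original vertex therefore has residual budget exactly $1$, so factor cuts of $G'$ restrict precisely to matching cuts of $G$. This single device performs simultaneously the budget conversion your midpoints were meant to do, the anchoring your cliques were meant to do, and the orientation forcing ($s\in V_1$, $t\in V_2$ when $i<j$) that your helper-vertex gadget was meant to do. Note also that pre-loading is \emph{incompatible} with your gadget rather than complementary to it: a pre-loaded endpoint has residual budget $1$ and cannot absorb the exact loads $i$ and $j$ that a separated $K_{2,i+j}$ gadget forces, so combining the two mechanisms overruns budgets and kills separated edges entirely.
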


\begin{proof}
If $i=j=1$, then the problem is {\sc Matching Cut with Roots} which, as we noted, is known to be NP-complete~\cite{GPS12}.  We split the remaining cases in two according to whether or not $i=1$.  In each case, we construct a polynomial time reduction from {\sc Matching Cut with Roots}.  In particular, we take an instance $(G,s,t)$ of {\sc Matching Cut with Roots}, and construct a graph $G'$ that is a supergraph of $G=(V,E)$ and show that 
\begin{enumerate}[(1)]
\item \label{cond:1} $(G',s,t)$ is an instance of {\sc $(i,j)$-Factor Cut with Roots} (that is, if $G'$ has an $(i,j)$-factor cut $(V'_1,V'_2)$, then $s \in V_1$ and $t \in V_2$ or, possibly, vice versa if $i=j$), 
\item \label{cond:3} if $G'$ has an $(i,j)$-factor cut, then $G$ has a matching cut, and
\item \label{cond:2} if $G$ has a matching cut, then $G'$ has an $(i,j)$-factor cut.
\end{enumerate}
We note that (\ref{cond:1}) is an atypical feature of an NP-completeness proof as, unusually for {\sc $(i,j)$-Factor Cut with Roots}, it is not immediate to recognize a problem instance. We let $n=|V|$.

\medskip
\noindent \textbf{Case 1}: $i=1$.\\
\noindent 
Let $k = \max\{(n-1)(j-1), 1+j\}$.
Construct $G'$ from $G$ by first adding a complete graph $K$ on $k$ vertices and adding edges from $s$ to every vertex of $V_K$.  Then, for each $v \in V_G \setminus \{s\}$, add edges from $v$ to $j-1$ vertices of $K$ in such a way that no vertex of $V_K$ has more than one neighbour in $V_G \setminus \{s\}$.  

Let $(V'_1,V'_2)$ be a $(1,j)$-factor cut of $G'$.
The vertices of $\{s\} \cup V_K$ induce a clique on $1+k > 1+j$ vertices.  So, by Lemma~\ref{lem:ijcliques}, $\{s\} \cup V_K \subseteq V'_1$ or $\{s\} \cup V_K \subseteq V'_2$.  

Suppose that $\{s\} \cup V_K \subseteq V'_2$.  Then $V_G$ must contain vertices of both $V'_1$ (otherwise $V'_1$ would be empty) and $V'_2$ (at least $s$). Thus, as $G$ is connected, we can find a vertex $v \in V'_1 \cap V_G$ that has a neighbour in $V'_2 \cap V_G$.  But $v$ also has $j-1 \geq 1$ neighbours in $V_K$ and so has at least 2 neighbours in $V'_2$, contradicting the definition of a $(1,j)$-factor cut.

So we must have that $\{s\} \cup V_K \subseteq V'_1$.  Let $V_1 = V'_1 \cap V_G$ and $V_2=V'_2$ be a partition of $V_G$, and let $M=E_G(V_1,V_2)$ and $M'=E_G(V'_1,V'_2)$ and notice that $M'$ is the union of $M$ and, for each $v \in V_2$, the $j-1$ edges from $v$ to $V_K$.  For each $v \in V_1$, $d_M(v) = d_{M'}(v) \leq 1$.   For each $v \in V_2$, $d_M(v) = d_{M'}(v) - (j-1) \leq 1$.  So $(V_1,V_2)$ is a matching cut of $G$; this proves (\ref{cond:3}). And as $s \in V_1$, we have, by the definition of factor roots, $t \in V_2$; this proves (\ref{cond:1}).

To prove (\ref{cond:2}), we note that if $(V_1,V_2)$ is a matching cut of $G$, then we can assume that $s \in V_1$ and $t \in V_2$ (else relabel them for the purpose of constructing~$G'$), and then $(V_1 \cup V_K,V_2)$ is a $(1,j)$-factor cut of~$G'$.

\medskip
\noindent \textbf{Case 2}: $i\geq 2$.\\ 
\noindent
Let $k = \max\{(n-1)(j-1), i+j\}$.
Construct $G'$ from $G$ by first adding a complete graph $K^s$ on $k$ vertices and adding edges from $s$ to every vertex of $V_{K^s}$, and then adding a complete graph $K^t$ on $k$ vertices and adding edges from $t$ to every vertex of $V_{K^t}$.  Then, for each $v \in V_G \setminus \{s\}$, add edges from $v$ to $j-1$ vertices of $K^s$ in such a way that no vertex of $V_{K^s}$ has more than one neighbour in $V_G \setminus \{s\}$.   Afterwards, for each $v \in V_G \setminus \{t\}$, add edges from $v$ to $i-1$ vertices of $K^t$ in such a way that no vertex of $V_{K^t}$ has more than one neighbour in $V_G \setminus \{t\}$.  

Let $(V'_1,V'_2)$ be an $(i,j)$-factor cut of $G'$.  The vertices of $\{s\} \cup V_{K^s}$ induce a clique on at least $1+k > i+j$ vertices.  So, by Lemma~\ref{lem:ijcliques}, $\{s\} \cup V_{K^s} \subseteq V'_1$ or $\{s\} \cup V_{K^s} \subseteq V'_2$.  Similarly $\{t\} \cup V_{K^t} \subseteq V'_1$ or $\{t\} \cup V_{K^t} \subseteq V'_2$.

Suppose that $\{s\} \cup V_{K^s}$ and $\{t\} \cup V_{K^t}$ are both subsets of $V'_1$.  Then $V_G$ must contain vertices of both $V'_1$ (at least $s$ and $t$) and $V'_2$ (else it would be empty). Thus, as $G$ is connected, we can find a vertex $v \in V'_2 \cap V_G$ that has a neighbour in $V'_1 \cap V_G$.  But $v$ also has $j-1$ neighbours in $V_{K^s}$ and $i-1$ neighbours in $V_{K^t}$ and so has at least $1+(i-1)+(j-1)=i+j-1>j\geq i$ neighbours in $V'_2$, contradicting the definition of an $(i,j)$-factor. By an analogous argument $\{s\} \cup V_{K^s}$ and $\{t\} \cup V_{K^t}$ cannot both be subsets of $V'_2$.

Suppose that $i<j$ and $\{s\} \cup V_{K^s} \subseteq V'_2$.  As $G$ is connected and $V_G$ contains vertices of both $V'_1$ and $V'_2$, we can find a vertex $v \in V'_1 \cap V_G$ that has a neighbour in $V'_2 \cap V_G$.  But $v$ also has $j-1 > i-1$ neighbours in $V_{K^s}$ and so has more than $i$ neighbours in $V'_2$, contradicting the definition of a $(i,j)$-factor.

Thus we have that $\{s\} \cup V_{K^s}$ and $\{t\} \cup V_{K^t}$ are subsets of separate parts and, moreover, either $\{s\} \cup V_{K^s} \subseteq V'_1$ or $i=j$. Thus (\ref{cond:1}) is proved, and we have, in either case, that each vertex in $V'_1 \cap V_G$ is joined by $i-1$ edges to vertices in $V'_2 \setminus V_G$, and each vertex in $V'_2 \cap V_G$ is joined by $j-1$ edges to vertices in $V'_1 \setminus V_G$.
Therefore each vertex in $V'_1 \cap V_G$ is joined to at most one vertex in $V'_2 \cap V_G$, and each vertex in $V'_2 \cap V_G$ is joined to at most one vertex in $V'_1 \cap V_G$.  Thus $(V'_1\cap V_G, V'_2\cap V_G)$ is a matching cut of $G$.
This proves (\ref{cond:3}). 

To prove (\ref{cond:2}), we note that if $(V_1,V_2)$ is a matching cut of $G$, then we can assume that $s \in V_1$ and $t \in V_2$ (else relabel them for the purpose of constructing~$G'$), and then $(V_1 \cup V_{K^s},V_2 \cup V_{K^t})$ is an $(i,j)$-factor cut of~$G'$.
\end{proof}

\subsection{The Hardness Reduction}\label{s-hard}

Let $H$ be a connected 2-reflexive target graph. Let $p$ and $q$ be the two (non-adjacent) reflexive 
vertices of $H$. 
The {\it length} of a path is its number of edges.
The {\it distance} between two vertices $u$ and $v$ in a graph $G$ is the length of a shortest path between them and is denoted $\dist_G(u,v)$. 
We define two induced subgraphs $H_1$ and $H_2$ of $H$ whose vertex sets partition $V_H$. First $H_1$ contains those vertices of $H$ that are closer to $p$ than to $q$; and $H_2$ contains those vertices that are at least as close to $q$ as to $p$ (so contains any vertex equidistant to $p$ and $q$).  That is, $V_{H_1}=\left\{ v \in V_H  : \dist_H (v, p) < \dist_H (v, q) \right\}$ and $V_{H_2} = \left\{ v \in V_H  : \dist_H (v, q) \le \dist_H (v, p) \right\}$.
See Figure~\ref{shc:fig:H} for an example.
The following lemma follows immediately from our assumption that $H$ is connected.

\begin{lemma}
\label{shc:claim_cc}
Both $H_1$ and $H_2$ are connected.
Moreover, $\dist_{H_1}(x,p) = \dist_H (x, p)$ for every $x\in V_{H_1}$ and $\dist_{H_2}(x,q) = \dist_H (x,q)$ for every $x\in V_{H_2}$.
\end{lemma}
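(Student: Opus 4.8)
The plan is to prove both assertions by exploiting the way $H_1$ and $H_2$ are defined via the distance function in the connected graph $H$. The key observation is that the two claims are closely linked: showing that shortest paths to $p$ (resp.\ $q$) stay entirely inside $H_1$ (resp.\ $H_2$) will simultaneously establish connectivity and the distance-preservation statement.

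First I would prove the distance-preservation claim for $H_1$, since connectivity will follow from it. Fix $x \in V_{H_1}$ and take a shortest path $P = x = u_0, u_1, \ldots, u_\ell = p$ in $H$, so $\ell = \dist_H(x,p)$. The goal is to show every $u_m$ lies in $V_{H_1}$, i.e.\ $\dist_H(u_m, p) < \dist_H(u_m, q)$. Along a shortest path to $p$ we have $\dist_H(u_m, p) = \ell - m$. For the other side, I would use the triangle inequality: since $u_m$ lies on a shortest path from $x$ to $p$, any path from $u_m$ to $q$ can be combined with the subpath back to $x$, giving $\dist_H(x,q) \le \dist_H(x, u_m) + \dist_H(u_m, q) = m + \dist_H(u_m, q)$, hence $\dist_H(u_m, q) \ge \dist_H(x,q) - m$. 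Because $x \in V_{H_1}$ we have $\dist_H(x,q) \ge \dist_H(x,p) + 1 = \ell + 1$, so $\dist_H(u_m, q) \ge \ell + 1 - m > \ell - m = \dist_H(u_m, p)$. Thus $u_m \in V_{H_1}$, the whole path $P$ lies in $H_1$, and $\dist_{H_1}(x,p) \le \ell = \dist_H(x,p)$; the reverse inequality is immediate since $H_1$ is a subgraph. This gives $\dist_{H_1}(x,p) = \dist_H(x,p)$, and as a byproduct every vertex of $H_1$ is connected to $p$ inside $H_1$, so $H_1$ is connected.

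The argument for $H_2$ is symmetric but requires care because $V_{H_2}$ uses the non-strict inequality $\dist_H(v,q) \le \dist_H(v,p)$, so vertices equidistant to $p$ and $q$ belong to $H_2$. Taking $x \in V_{H_2}$ and a shortest path $Q = x, \ldots, q$ of length $\ell' = \dist_H(x,q)$, the same triangle-inequality estimate gives, for each intermediate vertex $w$ at distance $m$ from $x$, that $\dist_H(w,p) \ge \dist_H(x,p) - m \ge \dist_H(x,q) - m = \ell' - m = \dist_H(w,q)$, where the middle inequality is exactly the defining condition $x \in V_{H_2}$. Hence $\dist_H(w,q) \le \dist_H(w,p)$, so $w \in V_{H_2}$, the path $Q$ stays in $H_2$, and both $\dist_{H_2}(x,q) = \dist_H(x,q)$ and the connectivity of $H_2$ follow.

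I expect the only delicate point to be keeping the strict versus non-strict inequalities straight between the two cases, and confirming that the vertex $q$ itself lies in $H_2$ (it does, trivially, since $\dist_H(q,q) = 0 \le \dist_H(q,p)$) and likewise $p \in V_{H_1}$, so that each $H_i$ actually contains the root it is being connected to. Beyond that, the proof is a routine application of the triangle inequality together with the fact that subpaths of shortest paths are shortest paths; there is no real obstacle, which is consistent with the excerpt's remark that the lemma ``follows immediately'' from connectivity of $H$.
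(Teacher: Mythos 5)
Your proof is correct: the triangle-inequality argument showing that every shortest path from $x\in V_{H_1}$ to $p$ (resp.\ from $x\in V_{H_2}$ to $q$) stays inside $H_1$ (resp.\ $H_2$) establishes both connectivity and distance preservation, and you handle the strict/non-strict inequality asymmetry between the two parts correctly. The paper gives no explicit proof --- it simply asserts the lemma follows immediately from connectedness of $H$ --- and your argument is precisely the routine verification it has in mind.
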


Let $\omega$ denote the size of a largest clique in $H$.
From graphs $H_1$ and $H_2$ we construct graphs $F_1$ and $F_2$, respectively, in the following way:

\begin{enumerate}
	\item for each $x\notin \{p,q\}$, create a vertex $t^1_x$;
	\item for $p$, create $\omega$ vertices $t^1_p,\ldots,t^{\omega}_p$;
	\item for $q$, create $\omega$ vertices $t^1_q,\ldots,t^{\omega}_q$;
	\item for $i=1,2$, add an edge in $F_i$ between any two vertices $t^h_x$ and $t^j_y$ if and only if $xy$ is an edge of $E_{H_i}$.
\end{enumerate}
We note that $F_1$ is the graph obtained by taking $H_1$ and replacing $p$ by a clique of size $\omega$.
Similarly, $F_2$ is the graph obtained by taking $H_2$ and replacing $q$ by a clique of size $\omega$.
We say that $t^1_p, \ldots, t^{\omega}_p$ are the {\it roots} of $F_1$ and that $t^1_q, \ldots, t^{\omega}_q$ are the {\it roots} of $F_2$. 
Figure~\ref{shc:fig:eg-graph-H} shows an example of the graphs $F_1$ and $F_2$ obtained from the graph $H$ in Figure~\ref{shc:fig:H}.

\begin{figure}[ht]

	\centering
	\begin{tikzpicture}[scale=0.85]

		\coordinate (a)  at ( 0.0, 0);
		\coordinate (b)  at ( 1.5, 1);
		\coordinate (c)  at ( 1.5,-1);
		\coordinate (f)  at ( 3.0, 0);
		\coordinate (h)  at ( 4.5, 0);
		
		\fill (a) circle[radius=3pt];
		\fill (b) circle[radius=3pt];
		\fill (c) circle[radius=3pt];
		\fill (f) circle[radius=3pt];
		\fill (h) circle[radius=3pt];

		\draw (a) -- (b);
		\draw (a) -- (c);
		
		\draw (b) -- (c);
		
		\draw (b) -- (f);
		
		\draw (c) -- (f);
		
		\draw (f) -- (h);
		
		\path[-] (a) edge  [in=60,out=120,loop] node {} ();
		\path[-] (h) edge  [in=60,out=120,loop] node {} ();

		\node[below = 0.15cm of a] (peeee) {$p$};
		\node[below = 0.15cm of h] (queue) {$q$};

		\coordinate (s)  at ( -1.5,  1);
		\coordinate (t)  at ( -1.5, -1);
		\coordinate (u)  at ( -3.0,  1);
		\coordinate (v)  at ( -3.0, -1);
		
		\coordinate (w)  at (  6.0,  1);
		\coordinate (x)  at (  6.0, -1);
		\coordinate (y)  at (  7.5,  0);
		
		\fill (s) circle[radius=3pt];
		\fill (t) circle[radius=3pt];
		\fill (u) circle[radius=3pt];
		\fill (v) circle[radius=3pt];
		\fill (w) circle[radius=3pt];
		\fill (x) circle[radius=3pt];
		\fill (y) circle[radius=3pt];
		
		\draw (a) -- (s);
		\draw (a) -- (t);
		
		\draw (s) -- (u);
		\draw (t) -- (v);
		
		\draw (h) -- (w);
		\draw (h) -- (x);
		\draw (w) -- (x);
		
		\draw (y) -- (w);
		\draw (y) -- (x);

		\draw[dashed] \convexpath{b,c,v,u}{0.5cm};
		\draw[dashed] \convexpath{f,w,y,x}{0.5cm};
		
		\node[below right = 0.5cm of c] (z) {$H_1$};
		\node[below = 0.8cm of y] (z) {$H_2$};

	\end{tikzpicture}

	\caption{An example of the construction of graphs $H_1$ and $H_2$ from a connected 2-reflexive target graph~$H$ with $\omega=3$.}
	\label{shc:fig:H}
\end{figure}
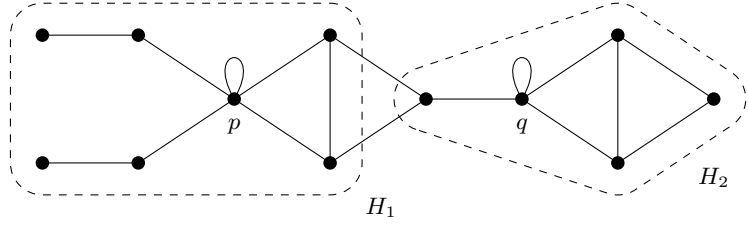

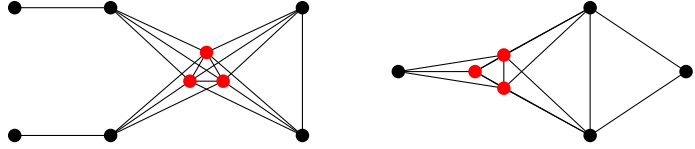
\begin{figure}[ht]

	\centering
	\begin{tikzpicture}[scale=0.85]

		\coordinate (a)  at ( 0.0, 0);
		\coordinate (a1) at ($ (a) + ( 90:0.3) $);
		\coordinate (a2) at ($ (a) + (210:0.3) $);
		\coordinate (a3) at ($ (a) + (330:0.3) $);
		\coordinate (b)  at ( 1.5, 1);
		\coordinate (c)  at ( 1.5,-1);
		\coordinate (f)  at ( 3.0, 0);
		\coordinate (h)  at ( 4.5, 0);
		\coordinate (h1) at ($ (h) + (180:0.3) $);
		\coordinate (h2) at ($ (h) + ( 60:0.3) $);
		\coordinate (h3) at ($ (h) + (-60:0.3) $);

		\draw (a1) -- (a2);
		\draw (a2) -- (a3);
		\draw (a3) -- (a1);
		
		\draw (h1) -- (h2);
		\draw (h2) -- (h3);
		\draw (h3) -- (h1);
		
		\draw (a1) -- (b);
		\draw (a1) -- (c);
		\draw (a2) -- (b);
		\draw (a2) -- (c);
		\draw (a3) -- (b);
		\draw (a3) -- (c);
		
		\draw (b) -- (c);
		
		\draw (f) -- (h1);
		\draw (f) -- (h2);
		\draw (f) -- (h3);	
		
		\coordinate (s)  at ( -1.5,  1);
		\coordinate (t)  at ( -1.5, -1);
		\coordinate (u)  at ( -3.0,  1);
		\coordinate (v)  at ( -3.0, -1);
		
		\coordinate (w)  at (  6.0,  1);
		\coordinate (x)  at (  6.0, -1);
		\coordinate (y)  at (  7.5,  0);
		
		\fill (s) circle[radius=3pt];
		\fill (t) circle[radius=3pt];
		\fill (u) circle[radius=3pt];
		\fill (v) circle[radius=3pt];
		\fill (w) circle[radius=3pt];
		\fill (x) circle[radius=3pt];
		\fill (y) circle[radius=3pt];
		
		\draw (a1) -- (s);
		\draw (a1) -- (t);
		\draw (a2) -- (s);
		\draw (a2) -- (t);
		\draw (a3) -- (s);
		\draw (a3) -- (t);
		
		\draw (s) -- (u);
		\draw (t) -- (v);
		
		\draw (h1) -- (w);
		\draw (h1) -- (x);
		\draw (h2) -- (w);
		\draw (h2) -- (x);
		\draw (h3) -- (w);
		\draw (h3) -- (x);
		\draw (w) -- (x);
		
		\draw (y) -- (w);
		\draw (y) -- (x);
		
		\fill[red] (a1) circle[radius=3pt];
		\fill[red] (a2) circle[radius=3pt];
		\fill[red] (a3) circle[radius=3pt];
		\fill (b) circle[radius=3pt];
		\fill (c) circle[radius=3pt];
		\fill (f) circle[radius=3pt];
		\fill[red] (h1) circle[radius=3pt];
		\fill[red] (h2) circle[radius=3pt];
		\fill[red] (h3) circle[radius=3pt];

	\end{tikzpicture}

	\caption{The graphs $F_1$ (left) and $F_2$ (right) resulting from the graph $H$ in Figure~\ref{shc:fig:H}.}
	\label{shc:fig:eg-graph-H}
\end{figure}

Let $\ell=\dist_H(p,q) \geq 2$ denote the distance between $p$ and $q$.
Let $N_p$ be the set of neighbours of $p$ 
that are each on some shortest path (thus of length~$\ell$) from $p$ to $q$ in $H$. Let $r_p$ be the size of a largest clique in $N_p$. We define 
$N_q$ and $r_q$ similarly.
We will reduce from $(r_p,r_q)$-{\sc Factor Cut with Roots}, which is \NP-complete due to
Theorem~\ref{shc:the_fc_theorem}.
Hence, consider an instance $(G,s,t)$ of $(r_p,r_q)$-{\sc Factor Cut with Roots}, where $G$ is a connected graph  
and
$s$ and $t$ form the (ordered) pair of
$(r_p,r_q)$-factor roots of $G$. 
Recall that we assume that $G$ is irreflexive.

We say that we {\em identify} two vertices $u$ and $v$ of a graph when we remove them from the graph and replace them with a single vertex that we make adjacent to every vertex that was adjacent to $u$ or $v$.
From $F_1$, $F_2$, and $G$ we construct a new graph $G^\prime$ as follows:

\begin{enumerate}	
	\item For each edge $e = uv \in E_G$, we do as follows. We create four vertices, $g_{u,e}^{\mathrm{r}}$, $g_{u,e}^{\mathrm{b}}$, $g_{v,e}^{\mathrm{r}}$ and $g_{v,e}^{\mathrm{b}}$. We also create two paths $P_e^1$ and $P_e^2$, each of length $\ell-2$, between $g_{u,e}^{\mathrm{r}}$ and $g_{v,e}^{\mathrm{b}}$, and between $g_{v,e}^{\mathrm{r}}$ and $g_{u,e}^{\mathrm{b}}$, respectively. If $\ell=2$ we identify $g_{u,e}^{\mathrm{r}}$ and $g_{v,e}^{\mathrm{b}}$ and $g_{v,e}^{\mathrm{r}}$ and $g_{u,e}^{\mathrm{b}}$ to get paths of length~0.
	\item For each vertex $u \in V_G$, we do as follows. 
	First we construct a clique $C_u$ on $\omega$ vertices. We denote these vertices by $g^1_u, \ldots, g^{\omega}_u$. We then make every vertex in $C_u$ adjacent to both $g_{u,e}^{\mathrm{r}}$ and $g_{u,e}^{\mathrm{b}}$ for every edge $e$ incident to $u$; we call $g_{u,e}^{\mathrm{r}}$ and $g_{u,e}^{\mathrm{b}}$ a {\it red} and {\it blue} neighbour of $C_u$, respectively; if $\ell=2$, then the vertex obtained by identifying two vertices
	 $g_{u,e}^{\mathrm{r}}$ and $g_{v,e}^{\mathrm{b}}$, or $g_{v,e}^{\mathrm{r}}$ and $g_{u,e}^{\mathrm{b}}$ is simultaneously a red neighbour of one clique and a blue neighbour of another one.
Finally, for every two edges $e$ and $e^\prime$ incident to $u$, we make $g_{u,e}^{\mathrm{r}}$ and $g_{u,e^\prime}^{\mathrm{r}}$ adjacent, that is, the set of red neighbours of $C_u$ form a clique, whereas the set of blue neighbours form an independent set.
	\item We add $F_1$ by identifying $t^i_p$ and $g^i_s$ for $i = 1,\ldots,\omega$, and we add $F_2$ by identifying $t^i_q$ and $g^i_t$ for $i = 1,\ldots,\omega$. We denote the vertices in $F_1$ and $F_2$ in $G^\prime$ by their label $t^i_x$ in $F_1$ or $F_2$.
\end{enumerate}
See Figure~\ref{shc:fig:the-graph-GGprs} for an example of a graph $G'$. 
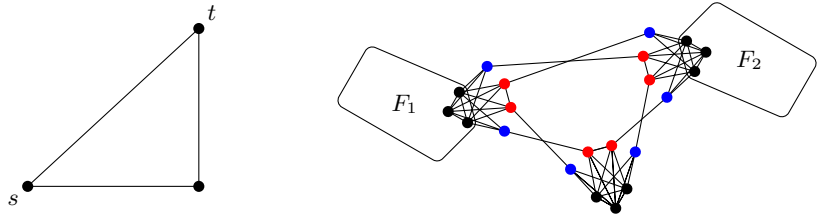
\begin{figure}[!ht]
    \centering
    \begin{subfigure}[t]{0.43\textwidth}

		\centering

		\begin{tikzpicture}[scale=0.7]

			\coordinate (s)  at (0.75, 0   );
			\coordinate (d)  at (4   , 0);
			\coordinate (t)  at (4   , 3);

			\fill (s) circle[radius=3pt];
			\fill (t) circle[radius=3pt];
			\fill (d) circle[radius=3pt];

			\draw (d) -- (s) -- (t) -- (d);

			\node[below left  = 0cm of s] (y) {$s$};
			\node[above right = 0cm of t] (z) {$t$};

		\end{tikzpicture}
		\caption{An example of a graph $G$ with a $(1,2)$-factor cut with	
$(1,2)$-factor roots $s$ and $t$.}
	\end{subfigure}
	~
    \begin{subfigure}[t]{0.53\textwidth}

		\centering
		\begin{tikzpicture}[scale=0.7, rotate=-30]

			\coordinate (s)  at (0.75, 0   );
			\coordinate (d)  at (4   , 0);
			\coordinate (t)  at (4   , 3);
			
			\coordinate (s1) at ($ (s) + (135:0.3) $);
			\coordinate (s2) at ($ (s) + (225:0.3) $);
			\coordinate (s3) at ($ (s) + (-45:0.3) $);
			
			\coordinate (s4) at ($ (s) + (  0:0.9) $);
			\coordinate (s5) at ($ (s) + ( 30:0.9) $);
			\coordinate (s6) at ($ (s) + ( 60:0.9) $);
			\coordinate (s7) at ($ (s) + ( 90:0.9) $);

			\coordinate (t1) at ($ (t) + ( 45:0.3) $);
			\coordinate (t2) at ($ (t) + (135:0.3) $);
			\coordinate (t3) at ($ (t) + (-45:0.3) $);
			
			\coordinate (t4) at ($ (t) + (180:0.9) $);
			\coordinate (t5) at ($ (t) + (210:0.9) $);
			\coordinate (t6) at ($ (t) + (240:0.9) $);
			\coordinate (t7) at ($ (t) + (270:0.9) $);
			
			\coordinate (d0) at ($ (d) + ( 45:0.3) $);
			\coordinate (d1) at ($ (d) + (225:0.3) $);
			\coordinate (dy) at ($ (d) + (-45:0.3) $);
			\coordinate (dx) at (dy);
			
			\coordinate (d2) at ($ (d) + ( 90:0.9) $);
			\coordinate (d3) at ($ (d) + (120:0.9) $);
			\coordinate (d4) at ($ (d) + (150:0.9) $);
			\coordinate (d5) at ($ (d) + (180:0.9) $);

			\coordinate (T13) at ($ (t) + (2,0) $);
			\coordinate (T12) at ($ (t) + (2,1) $);
			\coordinate (T11) at ($ (t) + (0,1) $);
			
			\coordinate (S13) at ($ (s) + (-2, 0) $);
			\coordinate (S12) at ($ (s) + (-2,-1) $);
			\coordinate (S11) at ($ (s) + ( 0,-1) $);
			
			\coordinate (TL) at ($ (t) + (1,0.5) $);
			\coordinate (SL) at ($ (s) + (-1,-0.5) $);

			\draw (s2) -- (s4) -- (s3) -- (s5) -- (s2) -- (s6) -- (s3) -- (s7) -- (s2);
			\draw (s4) -- (s1) -- (s5) -- (s6) -- (s1) -- (s7);
			
			\draw (t2) -- (t4) -- (t3) -- (t5) -- (t2) -- (t6) -- (t3) -- (t7) -- (t2);
			\draw (t4) -- (t1) -- (t5) -- (t6) -- (t1) -- (t7);
			
			\draw (s1) -- (s2) -- (s3) -- (s1);
			\draw (t1) -- (t2) -- (t3) -- (t1);
			\draw (d2) -- (d1) -- (d3) -- (d4) -- (d1) -- (d5);
			\draw (d2) -- (d0) -- (d3) -- (d4) -- (d0) -- (d5);
			\draw (d1) -- (d0);
			
			\draw (d3) -- (t7);
			\draw (d2) -- (t6);
			
			\draw (s7) -- (t5);
			\draw (s6) -- (t4);
			
			\draw (s5) -- (d5);
			\draw (s4) -- (d4);
			
			\draw (dx) -- (d0) -- (dy) -- (d1) -- (dx) -- (dy);
			\draw (d2) -- (dx) -- (d3) -- (dy) -- (d2);
			\draw (d4) -- (dx) -- (d5) -- (dy) -- (d4);
			
			\fill (s1) circle[radius=3pt];
			\fill (s2) circle[radius=3pt];
			\fill (s3) circle[radius=3pt];
			\fill[blue] (s4) circle[radius=3pt];
			\fill[red] (s5) circle[radius=3pt];
			\fill[red] (s6) circle[radius=3pt];
			\fill[blue] (s7) circle[radius=3pt];
			
			\fill (t1) circle[radius=3pt];
			\fill (t2) circle[radius=3pt];
			\fill (t3) circle[radius=3pt];
			\fill[blue] (t4) circle[radius=3pt];
			\fill[red] (t5) circle[radius=3pt];
			\fill[red] (t6) circle[radius=3pt];
			\fill[blue] (t7) circle[radius=3pt];
			
			\fill (d0) circle[radius=3pt];
			\fill (d1) circle[radius=3pt];
			\fill (dy) circle[radius=3pt];
			\fill[blue] (d2) circle[radius=3pt];
			\fill[red] (d3) circle[radius=3pt];
			\fill[red] (d4) circle[radius=3pt];
			\fill[blue] (d5) circle[radius=3pt];

			\draw \convexpath{t3, t2, T11, T12, T13}{0.15cm};
			\draw \convexpath{s1, s3, S11, S12, S13}{0.15cm};

			\node at (TL) {$F_2$};
			\node at (SL) {$F_1$};
			
		\end{tikzpicture}
		\caption{The corresponding graph $G^\prime$ where $H$ is a 2-reflexive target graph with $\ell = 3$ and $\omega = 3$.}
	\end{subfigure}
	
	\caption{An example of a graph $G$ and the corresponding graph $G^\prime$.}
	\label{shc:fig:the-graph-GGprs}
\end{figure}
The next lemma describes a straightforward property of graph homomorphisms that will prove useful.
\begin{lemma}
\label{shc:lemma:shortcuts}
If there exists a homomorphism $h \, : \, G^\prime \rightarrow H$ then $\dist_{G^\prime}(u, v) \ge \dist_H \left( h(u), h(v) \right)$ for every pair of vertices $u,v \in V_{G^\prime}$.
\end{lemma}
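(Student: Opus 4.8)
The plan is to exploit a single standard fact, namely that a graph homomorphism maps walks to walks, together with the characterisation of distance as the minimum length of a walk joining two vertices. Concretely, I would fix an arbitrary pair $u,v \in V_{G^\prime}$, set $k = \dist_{G^\prime}(u,v)$, and choose a shortest path $u = w_0, w_1, \ldots, w_k = v$ in $G^\prime$, so that $w_{i-1}w_i \in E_{G^\prime}$ for each $i \in \{1,\ldots,k\}$.

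Next I would push this path through $h$. Since $h$ is a homomorphism, each edge $w_{i-1}w_i$ of $G^\prime$ is mapped to an edge $h(w_{i-1})h(w_i)$ of $H$, so $h(w_0), h(w_1), \ldots, h(w_k)$ is a walk of length $k$ in $H$ from $h(u)$ to $h(v)$. Because $\dist_H(h(u),h(v))$ is by definition the minimum length of a walk joining $h(u)$ to $h(v)$ in $H$, the existence of this length-$k$ walk immediately yields $\dist_H(h(u),h(v)) \le k = \dist_{G^\prime}(u,v)$, which is exactly the claimed inequality. Since $u$ and $v$ were arbitrary, this establishes the lemma.

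There is essentially no obstacle here; the only point that requires a moment's care --- and the reason the statement is not entirely vacuous in this setting --- is that $H$ carries self-loops at $p$ and $q$. It may therefore happen that $h(w_{i-1}) = h(w_i)$ for some consecutive pair, namely when both endpoints of the edge are sent to one of the reflexive vertices. This causes no difficulty: the pair $h(w_{i-1})h(w_i)$ is then the self-loop at that vertex, which is still an edge of $H$, so the defining property of a homomorphism is respected; and such a stationary step merely keeps the walk at the same vertex, which can only leave the realised walk length unchanged and never forces the distance up. Hence no separate case analysis is needed, and the bound holds uniformly for every pair of vertices of $G^\prime$.
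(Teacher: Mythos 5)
Your proof is correct and is precisely the standard argument that the paper leaves implicit: the lemma is stated there without proof, as ``a straightforward property of graph homomorphisms.'' Your walk-image argument (a shortest path in $G^\prime$ maps to a walk of the same length in $H$, so the distance in $H$ cannot exceed it), including the observation that consecutive vertices may collapse only onto a reflexive vertex of $H$, is exactly the reasoning the authors take for granted.
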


We now prove the key property of our construction.

\begin{lemma}
\label{shc:claim1}
For every homomorphism $h$ from $G^\prime$ to $H$, there exists at least one clique $C_a$ with $p\in h(C_a)$ and at least one clique $C_b$ with
$q\in h(C_b)$.
\end{lemma}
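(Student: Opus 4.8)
The plan is to prove the statement in two stages: first I would show that each clique $C_u$ with $u\in V_G$ is forced to contain exactly one of $p,q$ in its image, and then use surjectivity of $h$ together with the gadgets $F_1,F_2$ to show that both $p$ and $q$ actually occur. (Surjectivity is genuinely needed: the constant map to the reflexive vertex $p$ is a homomorphism $G'\to H$ under which no clique is mapped onto $q$, so the conclusion can only hold for the surjective homomorphisms relevant to the reduction.)

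For the first stage, which is the engine of the proof, fix $u\in V_G$ with $d=\dgr_G(u)\ge 1$ and consider $C_u$ together with its set $R_u$ of red neighbours. By construction $C_u$ is a clique of size $\omega$, the $d$ red neighbours form a clique, and every red neighbour is adjacent to all of $C_u$; hence $C_u\cup R_u$ is a clique of $G'$ of size $\omega+d>\omega$. Since a homomorphism sends adjacent vertices to adjacent vertices (equal images being allowed only at reflexive vertices), the set $S=h(C_u\cup R_u)$ of distinct images is a clique of $H$, so $|S|\le\omega$. As $|S|<\omega+d$, the map is not injective on $C_u\cup R_u$, and any image with two preimages must be reflexive, i.e.\ lie in $\{p,q\}$. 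Because $p$ and $q$ are non-adjacent, the clique $S$ contains at most one of them, so all of the at least $d$ collisions are absorbed by a single reflexive vertex $z^\ast_u\in\{p,q\}$; in particular $z^\ast_u$ has at least $d+1$ preimages in $C_u\cup R_u$, more than the $d$ vertices of $R_u$, so at least one vertex of $C_u$ itself maps to $z^\ast_u$. Thus $z^\ast_u\in h(C_u)$, and $h(C_u)$ cannot contain a second reflexive vertex (that would place both $p$ and $q$ in the clique $S$). Hence each $C_u$ contains exactly one of $p,q$ in its image, which partitions $V_G$ into the $p$-type and the $q$-type cliques.

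For the second stage I would show that both classes are non-empty. Suppose for contradiction that every clique is $p$-type. Then each clique has a vertex mapping to $p$; every red or blue neighbour, being adjacent to such a vertex, maps into $N_H[p]$ (the closed neighbourhood of $p$) and hence not to $q$; and by Lemma~\ref{shc:lemma:shortcuts} every vertex on a connecting path lies at distance strictly less than $\ell=\dist_H(p,q)$ from a vertex mapped to $p$, so also cannot map to $q$. Consequently the only vertices that could map to $q$, or more generally to the $q$-side vertices of $H$, are the deep interior vertices of $F_1$ and $F_2$. Since $F_1$ and $F_2$ contain exactly one vertex per vertex of $H_1$ respectively $H_2$, and their roots are pinned near $p$, a covering argument via Lemma~\ref{shc:lemma:shortcuts} and Lemma~\ref{shc:claim_cc} shows that these few deep vertices cannot be used simultaneously to hit $q$ and to hit the far vertices of $H$ lying on the $q$-side, contradicting surjectivity. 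Hence some clique is $q$-type, and by the symmetric argument (interchanging $p$ with $q$ and $F_1$ with $F_2$) some clique is $p$-type, which is exactly the claim.

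I expect the second stage to be the main obstacle. The clique-counting of the first stage is robust and essentially forced, but the confinement argument is delicate precisely because a single deep interior vertex of $F_1$ (or $F_2$) can, in isolation, be mapped to the opposite reflexive vertex without any local obstruction; the contradiction only materialises globally, from the fact that $G'$ does not contain enough deep vertices to cover both the opposite reflexive vertex and the far same-side vertices of $H$ once all roots are pinned to one side. Making this count precise and uniform over all connected $2$-reflexive $H$ (including the degenerate case $\ell=2$, where the connecting paths collapse to identifications) is the technical heart of the argument.
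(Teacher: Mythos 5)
Your first stage is correct and is essentially the paper's opening step (the paper gets by with the smaller clique $C_u\cup\{g_{u,e}^{\mathrm{r}}\}$ of size $\omega+1$, but the argument is the same), and your parenthetical point about surjectivity is a genuine catch: the lemma as stated quantifies over all homomorphisms, the constant map to $p$ falsifies that reading, and indeed both the paper's proof and the application of the lemma inside Theorem~\ref{shc:the_theorem-gt2} use surjectivity. The genuine gap is your second stage: the ``covering argument'' you invoke is not a deferrable verification --- it \emph{is} the lemma. Everything you establish before it (red/blue neighbours land in $N_H[p]$, path vertices are too close to $h^{-1}(p)$ to reach $q$, so only deep vertices of $F_1$ and $F_2$ can hit far vertices) is the confinement part, which the paper also proves (its Claim~1). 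What remains is to show that the deep vertices are too few, and this does not follow from any crude count: $F_2$ contains one vertex $t^1_x$ for \emph{every} $x\in V_{H_2}$, so there is no shortage of deep vertices as such; the deficit arises only because depth in $F_2$ is measured from the roots (which play the role of $q$) while the vertices of $H$ that must be covered are far from $p$, and making that mismatch quantitative is exactly what you postpone as ``the technical heart''.

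The paper closes this as follows. It defines $\dee(v)$ to be the distance in $F_1$ (resp.\ $F_2$) to a root mapped to $p$, and $\lfloor\ell/2\rfloor$ for all other vertices; proves $\dee(v)\ge\dist_H(h(v),p)$ (Claim~1); deduces from surjectivity that for every $d\ge\ell$,
\begin{equation*}
\left|\left\{t^1_w : \dee(t^1_w)\ge d\right\}\right| \;\ge\; \left|\left\{w\in V_H : \dist_H(w,p)\ge d\right\}\right|
\end{equation*}
(Claim~2); and then exhibits a level at which this inequality fails. For that it takes $q'\in V_{H_2}$ farthest from $p$ subject to $\dist_H(q',q)<\dist_H(q',p)$, and sets $j=\dist_H(q',p)\ge\ell$: maximality of $q'$ forces every $x\in V_{H_2}$ with $\dist_H(x,p)>j$ to be equidistant from $p$ and $q$, so the two counts agree for all $d>j$, while at $d=j$ the vertex $q'$ itself drops out because $\dee(t^1_{q'})=\dist_H(q',q)<j$, giving a strict deficit and the contradiction. (In fact, since $\dee(t^1_x)=\dist_H(x,p)$ for $x\in V_{H_1}$, $\dee(t^1_x)=\dist_H(x,q)\le\dist_H(x,p)$ for $x\in V_{H_2}$, and $\dee(t^1_q)=0<\ell=\dist_H(q,p)$, the vertex $q$ already witnesses a strict deficit at the single level $d=\ell$, so the count you postponed can be closed in a few lines once Claims~1 and~2 are in place.) Without some such explicit level-and-witness computation, your proposal establishes only the confinement statement, not the lemma.
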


\begin{proof}
Since for each $u \in V_G$ and any edge $e$ incident to $u$, every clique $C_u \cup \{g_{u,e}^{\mathrm{r}}\}$ in $G'$ is of size at least $\omega+1$, we find that $h$ must map at least two of its vertices to a reflexive vertex, so either to $p$ or $q$. Hence, for every $u\in V_G$, we find that $h$ maps at least
one vertex of $C_u$ to either $p$ or $q$.
  
We prove the lemma by contradiction.  We will assume that $h$ does not map any vertex of any $C_u$ to $q$, thus $p \in h(C_u)$ for all $u\in V_G$.
We will note later that if instead $q\in h(C_u)$ for all $u\in V_G$ we can obtain a contradiction in the same way.

We consider two vertices $t^i_p \in F_1$ and $t^j_q \in F_2$ such that $h(t^i_p) = h(t^j_q) = p$.
Without loss of generality let $i=j=1$.
We shall refer to these vertices as $t_p$ and $t_q$ respectively.
We now consider a vertex $v \in V_{F_1} \disjun V_{F_2}$. By Lemma \ref{shc:lemma:shortcuts}, $\dist_{G^\prime}(v,t_p) \ge \dist_H(h(v),p)$ and $\dist_{G^\prime}(v,t_q) \ge \dist_H(h(v),p)$. In other words:
\begin{equation*}
\min \left(\dist_{G^\prime}(v,t_p), \dist_{G^\prime}(v,t_q) \right) \ge \dist_H(h(v),p).
\end{equation*}
In fact by applying Lemma~\ref{shc:lemma:shortcuts} we can generalize this further to any vertex mapped to $p$ by $h$:
\begin{equation}
\label{shc:eqn:anyw}
\min_{w \in h^{-1}(p)} \left(\dist_{G^\prime}(v,w) \right) \ge \dist_H(h(v),p).
\end{equation}
For every $v \in V_{G^\prime}$ we define a value $\dee(v)$ as follows:

\begin{equation*}
\dee(v) = \left\{
  \begin{array}{ll}
    \dist_{F_1} (v, t_p) \quad & \mathrm{if~} v \in F_1 \\
    \dist_{F_2} (v, t_q) & \mathrm{if~} v \in F_2 \\
    \lfloor \ell / 2 \rfloor & \mathrm{otherwise} \\
  \end{array}
\right.
\end{equation*}

\begin{nclaim}
\label{shc:lemma:d_of_v}
$\dee(v) \ge \min_{w \in h^{-1}(p)} \left(\dist_{G^\prime}(v,w) \right) \ge \dist_H(h(v),p)$ for all $v \in V_{G^\prime}$.
\end{nclaim}

\noindent
We prove Claim \ref{shc:lemma:d_of_v} by showing that 
$\dee(v) \ge \min_{w \in h^{-1}(p)} \left(\dist_{G^\prime}(v,w) \right)$, which suffices due to~(\ref{shc:eqn:anyw}).
First suppose $v \in V_{F_1} \cup V_{F_2}$. We may assume, without loss of generality, that $v \in V_{F_2}$.
So $\dee(v)=\dist_{F_2}(v, t_q)=\dist_{G^\prime}(v, t_q)\ge \min_{w \in h^{-1}(p)} \left(\dist_{G^\prime}(v,w) \right)$, as $t_q \in h^{-1}(p)$.

Now suppose $v \not\in V_{F_1} \cup V_{F_2}$. Then $v$ either belongs to a clique $C_u$ or is a vertex of a path $P^1_e$ or $P^2_e$ between two cliques.
If $v$ belongs to a clique or is an end-vertex of such a path, then $v$ is
either in $h^{-1}(p)$ or adjacent to a vertex in $h^{-1}(p)$ (since at least one vertex in $C_u$ maps to $p$). Hence $\dee(v)= \lfloor \ell / 2 \rfloor\geq 1\geq\min_{w \in h^{-1}(p)} \left(\dist_{G^\prime}(v,w) \right)$.
Finally, suppose $v$ is an inner vertex of a path $P^1_e$ or $P^2_e$. By definition, such a path has length~$\ell-2$. 
Then $v$ is at most distance $\lfloor (\ell - 2) / 2 \rfloor$ from a vertex in a clique, which we know is either in $h^{-1}(p)$ or adjacent to a vertex in $h^{-1}(p)$. Hence $\dee(v)=    \lfloor \ell / 2 \rfloor =  \lfloor (\ell - 2) / 2 \rfloor + 1 \geq \min_{w \in h^{-1}(p)} \left(\dist_{G^\prime}(v,w) \right)$.
This proves Claim \ref{shc:lemma:d_of_v}.

\begin{nclaim}
\label{shc:required_reach}
If there exists a surjective homomorphism from $G^\prime$ to $H$, then for any integer $d \ge \ell$:
\begin{equation*}
\left| \left\{ t^1_w \in V_{F_1} \disjun V_{F_2}  : \dee(t^1_w) \ge d \right\} \right| \ge \left| \left\{ w \in V_H  : \dist_H(w, p) \ge d \right\} \right|.
\end{equation*}
\end{nclaim}

\noindent
We prove Claim~\ref{shc:required_reach} as follows.
Using the fact that with a surjective homomorphism every vertex must be mapped to, we see from Lemma \ref{shc:lemma:shortcuts} that if there are $n$ vertices in $H$ which are at a distance $d$ from $p$, there must be at least $n$ vertices in $G^\prime$ that are at distance at least $d$ from every vertex that maps to $p$. This means we can say for any distance $d \ge 0$:

\begin{equation*}
|\{ v \in V_{G^\prime}  : \min_{w \in h^{-1}(p)} \left(\dist_{G^\prime}(v,w) \right) \ge d\}| \ge \left| \left\{ w \in V_H  : \dist_H(w, p) \ge d \right\} \right|.
\end{equation*}

Combining this inequality with Claim \ref{shc:lemma:d_of_v} yields, for every distance $d \ge 0$:

\begin{equation*}
\left| \left\{ v \in V_{G^\prime}  : \dee(v) \ge d \right\} \right| \ge \left| \left\{ w \in V_H  : \dist_H(w, p) \ge d \right\} \right|.
\end{equation*}
Now let $d\geq \ell$. Then we only have to consider vertices in $F_1 \disjun F_2$. Hence, for every $d \ge \ell$:
\begin{equation*}
\left| \left\{ t^i_w \in V_{F_1} \disjun V_{F_2}  : \dee(t^i_w) \ge d \right\} \right| \ge \left| \left\{ w \in V_H  : \dist_H(w, p) \ge d \right\} \right|.
\end{equation*}
By construction, for any $t^i_w$ with $i>1$ we have that $w\in \{s,t\}$ and thus $\dee(t^i_w) \le 1 < \ell\leq d$.
Therefore, no vertex $t^i_w$ with $i \ne 1$ is involved in the equation above, so we can write:
\begin{equation*}
\left| \left\{ t^1_w \in V_{F_1} \disjun V_{F_2}  : \dee(t^1_w) \geq d \right\} \right| \ge \left| \left\{ w \in V_H  : \dist_H(w, p) \ge d \right\} \right|.
\end{equation*}
Hence Claim~\ref{shc:required_reach} is proven.

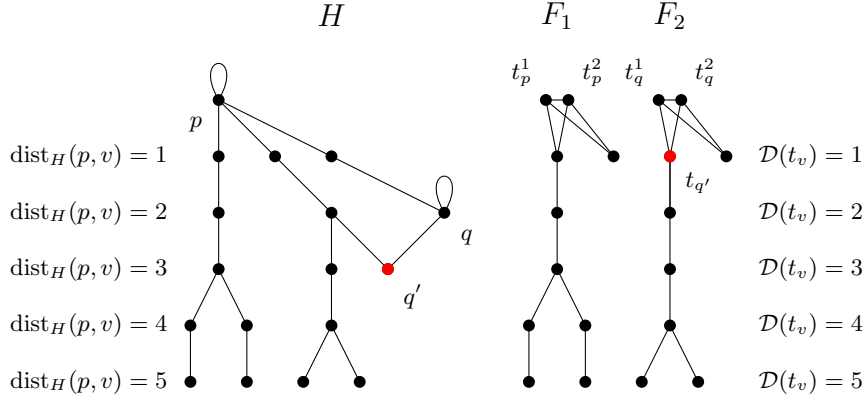
\begin{figure}[ht]
    \centering
	\begin{tikzpicture}[scale=0.75]			
		\begin{scope}[shift={(0, 0)}]
          		\coordinate (p)  at ( 0.0, -0.0);
			\coordinate (q)  at ( 4.0, -2.0);
		
			\coordinate (pp) at ( 2.0, -1.0);
		
			\coordinate (a)  at ( 1.0, -1.0);
			\coordinate (b)  at ( 2.0, -2.0);
			\coordinate (c)  at ( 3.0, -3.0);
		
			\coordinate (d)  at ( 2.0, -3.0);
			\coordinate (e)  at ( 2.0, -4.0);
			\coordinate (f)  at ( 1.5, -5.0);
			\coordinate (g)  at ( 2.5, -5.0);
		
			\coordinate (h)  at ( 0.0, -1.0);
			\coordinate (i)  at ( 0.0, -2.0);
			\coordinate (j)  at ( 0.0, -3.0);
			\coordinate (k)  at (-0.5, -4.0);
			\coordinate (l)  at (-0.5, -5.0);
			\coordinate (m)  at ( 0.5, -4.0);
			\coordinate (n)  at ( 0.5, -5.0);
		
			\fill (p) circle[radius=3pt];
			\fill (q) circle[radius=3pt];
			\fill (pp) circle[radius=3pt];
			\fill (a) circle[radius=3pt];
			\fill (b) circle[radius=3pt];
			\fill (c) circle[radius=3pt];
			\fill (d) circle[radius=3pt];
			\fill (e) circle[radius=3pt];
			\fill (f) circle[radius=3pt];
			\fill (g) circle[radius=3pt];
			\fill (h) circle[radius=3pt];
			\fill (i) circle[radius=3pt];
			\fill (j) circle[radius=3pt];
			\fill (k) circle[radius=3pt];
			\fill (l) circle[radius=3pt];
			\fill (m) circle[radius=3pt];
			\fill (n) circle[radius=3pt];
		
			\path[-] (p) edge  [in=60,out=120,loop] node {} ();
			\path[-] (q) edge  [in=60,out=120,loop] node {} ();
		
			\draw (p) -- (pp) -- (q);
			\draw (p) -- (a) -- (b) -- (c) -- (q);
		
			\draw (b) -- (d) -- (e) -- (f);
			\draw (e) -- (g);
		
			\draw (p) -- (h) -- (i) -- (j) -- (k) -- (l);
			\draw (j) -- (m) -- (n);
		
			\node[below left = 0.15cm of p] (peeee) {$p$};
			\node[below right = 0.15cm of q] (queue) {$q$};
			\node[below right = 0.13cm of c] (queuep) {$q^\prime$};
			\fill[red] (c) circle[radius=3pt];

			\node (D1) at (-2.3, -1) {$\dist_H(p,v) = 1$};
			\node (D2) at (-2.3, -2) {$\dist_H(p,v) = 2$};
			\node (D3) at (-2.3, -3) {$\dist_H(p,v) = 3$};
			\node (D4) at (-2.3, -4) {$\dist_H(p,v) = 4$};
			\node (D5) at (-2.3, -5) {$\dist_H(p,v) = 5$};
					
			\node (H) at (2, 1.5) {\large $H$};		
		\end{scope}
			
		\begin{scope}[shift={(6, 0)}]
			\coordinate (p)  at ( 0.2, -0.0);
			\coordinate (p2) at (-0.2, -0.0);
			\coordinate (q)  at ( 2.2, -0.0);
			\coordinate (q2) at ( 1.8, -0.0);
		
			\coordinate (pp) at ( 3.0, -1.0);
		
			\coordinate (a)  at ( 1.0, -1.0);
			\coordinate (b)  at ( 2.0, -2.0);
			\coordinate (c)  at ( 2.0, -1.0);
		
			\coordinate (d)  at ( 2.0, -3.0);
			\coordinate (e)  at ( 2.0, -4.0);
			\coordinate (f)  at ( 1.5, -5.0);
			\coordinate (g)  at ( 2.5, -5.0);
		
			\coordinate (h)  at ( 0.0, -1.0);
			\coordinate (i)  at ( 0.0, -2.0);
			\coordinate (j)  at ( 0.0, -3.0);
			\coordinate (k)  at (-0.5, -4.0);
			\coordinate (l)  at (-0.5, -5.0);
			\coordinate (m)  at ( 0.5, -4.0);
			\coordinate (n)  at ( 0.5, -5.0);
		
			\fill (p) circle[radius=3pt];
			\fill (q) circle[radius=3pt];
			\fill (p2) circle[radius=3pt];
			\fill (q2) circle[radius=3pt];
			\fill (pp) circle[radius=3pt];
			\fill (a) circle[radius=3pt];
			\fill (b) circle[radius=3pt];
			\fill (c) circle[radius=3pt];
			\fill (d) circle[radius=3pt];
			\fill (e) circle[radius=3pt];
			\fill (f) circle[radius=3pt];
			\fill (g) circle[radius=3pt];
			\fill (h) circle[radius=3pt];
			\fill (i) circle[radius=3pt];
			\fill (j) circle[radius=3pt];
			\fill (k) circle[radius=3pt];
			\fill (l) circle[radius=3pt];
			\fill (m) circle[radius=3pt];
			\fill (n) circle[radius=3pt];
		
			\draw (q) -- (q2);
			\draw (p) -- (p2);
			
			\draw (q) -- (pp);
			\draw (q2) -- (pp);
			\draw (p) -- (a);
			\draw (p2) -- (a);
			\draw (b) -- (c) -- (q);
			\draw (b) -- (c) -- (q2);
		
			\draw (b) -- (d) -- (e) -- (f);
			\draw (e) -- (g);
		
			\draw (p) -- (h) -- (i) -- (j) -- (k) -- (l);
			\draw (p2) -- (h);
			\draw (j) -- (m) -- (n);
		
			\node[above left = 0.1cm of p2] (peeee) {$t_p^1$};
			\node[above left = 0.1cm of q2] (queue) {$t_q^1$};
			\node[above right = 0.1cm of p] (peeee) {$t_p^2$};
			\node[above right = 0.1cm of q] (queue) {$t_q^2$};
			
			\node[below right = 0.13cm of c] (queuep) {$t_{q^\prime}$};
			\fill[red] (c) circle[radius=3pt];
		
			\node (D1) at (4.5, -1) {$\dee(t_v) = 1$};
			\node (D2) at (4.5, -2) {$\dee(t_v) = 2$};
			\node (D3) at (4.5, -3) {$\dee(t_v) = 3$};
			\node (D4) at (4.5, -4) {$\dee(t_v) = 4$};
			\node (D5) at (4.5, -5) {$\dee(t_v) = 5$};		
			
			\node (F1) at (0, 1.5) {\large $F_1$};
			\node (F2) at (2, 1.5) {\large $F_2$};		
		\end{scope}		
	\end{tikzpicture}
	\caption{An example of a graph $H$ with corresponding graphs $F_1$ and $F_2$. Vertices in~$H$ equidistant from $p$ are plotted at the same vertical position and likewise vertices $t_v \in F_1$ and $t_w \in F_2$ with $\dee(t_v) = \dee(t_w)$ are plotted at the same vertical position. The vertices $q^\prime \in H$ and corresponding $t_{q^\prime} \in F_2$ are highlighted.}
	\label{shc:fig:required-reach}
\end{figure}

\medskip
\noindent
We first present the intuition behind the final part of the proof.
Consider the graphs $F_1$, $F_2$ and $H$ in the example shown in Figure~\ref{shc:fig:required-reach}.
We recall that every vertex $v$ (other than $p$ or $q$) has a single corresponding vertex $t_v$ in $F_1$ or $F_2$.
We may naturally want to map the vertices of $F_1$ onto the vertices of $H_1$, which is possible by definition of $F_1$.
However, when we try to map the vertices of $F_2$ onto the vertices of $H_2$, with $h(t_q^i) = p$ (for some $i$), we will prove that there is at least one vertex~$q^\prime$ in $H_2$ which is further from $p$ in $H$ than it is from $q$ and that cannot be mapped to and thus violates the surjectivity constraint.
In Figure~\ref{shc:fig:required-reach} this vertex, which will play a special role in our proof, is shown in red.
In the example of this figure, $\ell=3$ and we observe 
 that there are ten vertices in $H$ 
 (including $q'$) 
 with $\dist_H(p,v) \ge 3$ but only nine vertices 
 (excluding $q'$)
 in $F_1 \cup F_2$ with $\dee(t_v) \ge 3$ which could be mapped to these vertices. This contradicts Claim~\ref{shc:required_reach}. 

We now formally prove
that our initial assumption that $p \in h(C_u)$ for all $u\in V_G$ contradicts Claim~\ref{shc:required_reach}.
For every vertex $x$ in $H_1$ there is a corresponding vertex~$t^1_x$ such that 
$\dee(t^1_x)=\dist_{F_1}(t^1_x, t_p)=\dist_{H_1}(x, p)$, where the latter equality follows from the construction of $F_1$.
From Lemma \ref{shc:claim_cc} we find that $\dist_{H_1}(x, p)=\dist_{H}(x, p)$ for every $x\in V_{H_1}$.
Hence $\dee(t^1_x)=\dist_{H}(x, p)$, 
and for all $d \ge 0$:

\begin{equation}
\label{e-extra}
\left| \left\{ t^1_x \in V_{F_1} : \dee(t^1_x) \ge d \right\} \right| = \left| \left\{ x \in V_{H_1} : \dist_H(x, p) \ge d \right\} \right| .
\end{equation}

Now let $x\in V_{H_2}$. 
Using the same arguments,
we see that $\dee(t^1_x) = \dist_{H}(x, q)$, and thus $\dee(t^1_x) = \dist_{H}(x, q) \le \dist_{H}(x, p)$ by definition.
Note that, had we instead supposed that it was $q$ to which everything mapped, we would instead have a strict inequality.  As it turns out, we only need the weaker inequality. 

We now look for a vertex $q^\prime$ in $H_2$, such that $q^\prime$ is as far from $p$ as possible, subject to the condition that $\dist_H(q^\prime, q) < \dist_H(q^\prime, p)$. Let $j=\dist_H(q^\prime, p)$. We see that for any vertex $x$ in $H_2$ such that
$\dist_H(x, p) > j$, it is the case that $\dist_H(x, q) = \dist_H(x, p)$.
Note that there may be no vertices with $\dist_H(x, q) = \dist_H(x, p)$ in which case $q^\prime$ is simply the farthest vertex from $p$ within $H_2$.
We also observe that $q'=q$ is possible.
So $j$ is well defined and, in fact, we have that $j\geq \ell$.

We now consider the mapping of vertices in $H_2$ at a distance $d \ge \ell$ from $p$.
We recall that $\dee(t^1_x) = \dist_{H}(x, q)$ for every $x$ in $H_2$ and that for a vertex~$x\in H_2$ of distance at least $j+1$ from $q$ in $H$, it holds that $\dist_{H}(x, q) = \dist_{H}(x, p)$.
Combining this with equation~(\ref{e-extra}) yields that:

\begin{equation}
\label{shc:eq:tada_1}
\left| \left\{ t^1_x \in V_{F_1} \disjun V_{F_2} : \dee(t^1_x) > j \right\} \right| = \left| \left\{ x \in V_H : \dist_H(x, p) > j \right\} \right|.
\end{equation}

However, for $d = j$ we find that, in addition to vertices in $H_2$ equidistant from $p$ and $q$, there is at least one vertex that is closer to $q$ than $p$, namely $q^\prime$, for which it holds that $\dee(t^1_{q^\prime})= \dist_{H}(q^\prime, q)< \dist_H({q^\prime}, p) = j$.
It therefore follows that there are fewer vertices $t^1_{x}$ with $\dee(t^1_{x}) = j$ than there are vertices $x$ with $\dist_H({x}, p) = j$ and hence we see that:

\begin{equation}
\label{shc:eq:tada_2}
\left| \left\{ t^1_x \in V_{F_1} \disjun V_{F_2} : \dee(t^1_x) = j \right\} \right| < \left| \left\{ x \in V_H : \dist_H(x, p) = j \right\} \right|.
\end{equation}

By combining equations (\ref{shc:eq:tada_1}) and (\ref{shc:eq:tada_2}), we see that:

\begin{equation*}
\left| \left\{ t^1_x \in V_{F_1} \disjun V_{F_2} : \dee(t^1_x) \ge j \right\} \right| < \left| \left\{ x \in V_H : \dist_H(x, p) \ge j \right\} \right|.
\end{equation*}
As $j\geq \ell$, this contradicts Claim \ref{shc:required_reach} and concludes the proof of Lemma~\ref{shc:claim1}.
\end{proof}

We are now ready to state our main result.

\begin{theorem}
\label{shc:the_theorem-gt2}
For every connected $2$-reflexive graph $H$, the \shc{} problem is \NP-complete.
\end{theorem}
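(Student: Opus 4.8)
The plan is to prove \NP-completeness by a polynomial-time reduction from $(r_p,r_q)$-{\sc Factor Cut with Roots}, which is \NP-complete by Theorem~\ref{shc:the_fc_theorem}; membership in \NP\ is immediate since one can guess a mapping and verify in polynomial time that it is a surjective homomorphism. Given an instance $(G,s,t)$ with $(r_p,r_q)$-factor roots $s,t$, I would take the graph $G'$ already constructed from $F_1$, $F_2$ and $G$. Since $\omega$, $\ell$, $r_p$ and $r_q$ depend only on the fixed graph $H$, the graph $G'$ has size linear in $|V_G|+|E_G|$ and is built in polynomial time. The whole proof then reduces to establishing the equivalence that \emph{$G$ has an $(r_p,r_q)$-factor cut if and only if $G'$ admits a surjective homomorphism to $H$}.

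For the direction from a homomorphism to a cut, I would start from a surjective homomorphism $h\colon G'\to H$. As in the opening paragraph of the proof of Lemma~\ref{shc:claim1}, each clique $C_u$ has a vertex mapped by $h$ to a reflexive vertex, so $h(C_u)$ meets $\{p,q\}$; and it meets $\{p,q\}$ in exactly one element, because $C_u$ is a clique while $p$ and $q$ are non-adjacent. This lets me define $V_1=\{u: p\in h(C_u)\}$ and $V_2=\{u: q\in h(C_u)\}$, a partition of $V_G$ that is proper by Lemma~\ref{shc:claim1} (which supplies cliques $C_a,C_b$ with $p\in h(C_a)$ and $q\in h(C_b)$). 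For a crossing edge $e=uv$ with $u\in V_1$, $v\in V_2$, I would trace the path of length exactly $\ell$ running from a $p$-preimage in $C_u$ through $g_{u,e}^{\mathrm{r}}$ and $P_e^1$ and $g_{v,e}^{\mathrm{b}}$ to a $q$-preimage in $C_v$; by Lemma~\ref{shc:lemma:shortcuts} its image is a shortest $p$--$q$ path, so $h(g_{u,e}^{\mathrm{r}})\in N_p$. As the red neighbours of $C_u$ form a clique and $N_p$ contains no reflexive vertex, the red neighbours belonging to crossing edges receive \emph{distinct}, pairwise-adjacent images in $N_p$, i.e.\ a clique of size $d_M(u)$, whence $d_M(u)\le r_p$. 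The symmetric argument through $P_e^2$ and $N_q$ gives $d_M(v)\le r_q$, so $(V_1,V_2)$ is an $(r_p,r_q)$-factor cut.

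For the converse, I would take an $(r_p,r_q)$-factor cut $(V_1,V_2)$; since $s,t$ are $(r_p,r_q)$-factor roots we may assume $s\in V_1$ and $t\in V_2$. Map every $C_u$ with $u\in V_1$ to $p$ and every $C_u$ with $u\in V_2$ to $q$, sending the paths of each non-crossing edge onto the relevant self-loop. For each crossing edge route $P_e^1$ along a shortest $p$--$q$ path and $P_e^2$ along a shortest $q$--$p$ path, assigning the at most $r_p$ red neighbours of a $V_1$-vertex to a clique of $N_p$ and the at most $r_q$ red neighbours of a $V_2$-vertex to a clique of $N_q$ (possible as $d_M\le r_p,r_q$), while the blue neighbours, being independent, may all go to a single vertex of $N_q$ resp.\ $N_p$. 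Finally I would extend $h$ over $F_1$ and $F_2$ by the natural maps onto $H_1$ and $H_2$ (roots to $p$, resp.\ $q$), which agree with the assignment on $C_s$ and $C_t$; using Lemma~\ref{shc:claim_cc} these maps exist and together cover $V_{H_1}\cup V_{H_2}=V_H$, so $h$ is surjective.

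The main obstacle is the degree bound in the homomorphism-to-cut direction: one must show that the crossing edges at a vertex are forced into a clique of $N_p$ (resp.\ $N_q$), and this rests on the exact length-$\ell$ shortest-path routing provided by Lemma~\ref{shc:lemma:shortcuts} together with the irreflexivity of $N_p$ and $N_q$, which is what yields distinct images for adjacent red neighbours. By contrast, the surjectivity verification in the cut-to-homomorphism direction is routine once $F_1$ and $F_2$ are matched to $H_1$ and $H_2$ via Lemma~\ref{shc:claim_cc}, and the deepest structural fact—that both $p$ and $q$ must lie in the image—has already been secured by Lemma~\ref{shc:claim1}; what remains is careful bookkeeping of the two directions.
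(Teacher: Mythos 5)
Your proposal is correct and takes essentially the same route as the paper's own proof: the same reduction from $(r_p,r_q)$-{\sc Factor Cut with Roots} via the graph $G'$, with Lemma~\ref{shc:claim1} supplying non-emptiness of both parts, Lemma~\ref{shc:lemma:shortcuts} forcing the crossing paths onto shortest $p$--$q$ paths, and the red-clique argument (together with the irreflexivity of $N_p$ and $N_q$) giving the degree bounds $d_M(u)\le r_p$ and $d_M(v)\le r_q$ in both directions. The only nitpick is your remark that the blue neighbours ``may all go to a single vertex'' of $N_p$ (resp.\ $N_q$): their images are dictated by the shortest paths chosen for their respective crossing edges and need not coincide, but since the blue neighbours form an independent set this imprecision is harmless.
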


\begin{proof}
Let $H$ be a connected $2$-reflexive graph with reflexive vertices $p$ and $q$ at distance $\ell\geq 2$ from each other. 
Let $\omega$ be the size of a largest clique in $H$.
We define the graphs $H_1$, $H_2$, $F_1$ and $F_2$, sets $N_p$ an $N_q$, and values $r_p$, $r_q$ as above.
Recall that the problem is readily seen to be in \NP\ and that we reduce from $(r_p,r_q)$-{\sc Factor Cut with Roots}.
From $F_1, F_2$ and an 
instance~$(G,s,t)$ 
of the latter problem we construct the graph $G^\prime$.
We claim that $G$ has an $(r_p,r_q)$-factor cut $(V_1,V_2)$ if and only if there exists a surjective homomorphism~$h$ from 
$G^\prime$ to~$H$.

First suppose that $G$  has an $(r_p,r_q)$-factor cut $(V_1,V_2)$. By definition, $s\in V_1$ and $t\in V_2$.
We define a homomorphism $h$ as follows. For every $x\in V_{F_1}\cup V_{F_2}$, we let $h$ map
$t^1_x$ to $x$. This shows that $h$ is surjective. It remains to define $h$ on the other vertices.
For every $u\in V_G$, let $h$ map all of $C_u$ to $p$ if $u$ is in $V_1$ and let $h$ map all of $C_u$ to $q$ if $u$ is in $V_2$ (note that this is consistent with how we defined $h$ so far). For each $uv\in E_G$ with $u,v\in V_1$, we map the vertices of the paths $P^1_e$ and $P^2_e$ to $p$.
For each $uv\in E_G$ with $u,v\in V_2$, we map the vertices of the paths $P^1_e$ and $P^2_e$ to $q$. We are left to show that the vertices of the remaining paths $P^1_e$ and $P^2_e$ can be mapped to appropriate vertices of $H$. 

Note that the red neighbours of each $C_u$ form a clique (whereas all blue vertices of each $C_u$ form an independent set and inner vertices of paths $P^1_e$ and $P^2_e$ have degree~2).
However, as  $(V_1,V_2)$ is an $(r_p,r_q)$-factor cut of $G$, all but at most $r_p$ vertices of these red cliques  have been mapped to $p$ already if
$u\in V_1$ and all but at most $r_q$ vertices have been mapped to $q$ already if $u\in V_2$. By definition of $r_p$ and $r_q$,
 this means that we can map the vertices of the paths $P^1_e$ and $P^2_e$ with $e=uv$ for $u\in V_1$ and $v\in V_2$ to vertices of appropriate shortest paths between $p$ and $q$ in $H$, so that $h$ is a homomorphism from $G^\prime$ to $H$ (recall that we already showed surjectivity). 
In particular, the clique formed by the red neighbours of each $C_u$ is mapped to a clique in $N_p\cup \{p\}$ or $N_q\cup \{q\}$.
  
Now suppose that there exists a surjective homomorphism~$h$ from $G^\prime$ to~$H$.
For a clique $C_u$, we may choose any edge $e$ incident to $u$, such that $C_u^{\prime} = C_u \cup \{g_{u,e}^{\mathrm{r}}\}$ is a clique of size $\omega+1$.
Since $H$ contains no cliques larger than $\omega$, we find that $h$ maps each clique~$C_u^{\prime}$ (which has size $\omega+1$) to a clique in $H$ that contains a reflexive vertex. 
Note that at least two vertices of $C_u^{\prime}$ are mapped to a reflexive vertex. Hence we can define the following partition of $V_G$.
We let $V_1 = \left\{ v \in V_G : p\in h(C_v) \right\}$ and $V_2 = V_G \setminus V_1=\left\{ v \in V_G : q\in h(C_v) \right\}$.
Lemma~\ref{shc:claim1} tells us that $V_1 \ne \emptyset$ and $V_2 \ne \emptyset$.
We define $M = \left\{uv \in E_G  : u \in V_1,\, v \in V_2 \right\}$.

Let $e = uv$ be an arbitrary edge in $M$.
By definition, $h$ maps all of $C_u$ to a clique containing $p$ and all of $C_v$ to a clique containing $q$.
Hence, the vertices of the two paths  $P_e^1$ and $P_e^2$ must be mapped to the vertices of a shortest path between $p$ and $q$.
At most $r_p$ red neighbours of every $C_u$ with $u\in V_1$ can be mapped to a vertex other than $p$. This is because these red neighbours form a clique. As such they must be mapped onto vertices that form a clique in $H$. As such vertices lie on a shortest path from $p$ to $q$, the clique in $H$ has size at most $r_p$.
Similarly, at most $r_q$ red neighbours of every $C_u$ with $u\in V_2$ can be mapped to a vertex other than $q$.
As such, $(V_1,V_2)$ is an $(r_p,r_q)$-factor cut in $G$.
\end{proof}

\subsection{A Small Extension}\label{s-refcl}

Two vertices $u$ and $v$ in a graph $G$ are {\it true twins} if they are adjacent to each other and share the same neighbours in $V_G\setminus \{u,v\}$.
Let $H^{(i,j)}$ be a graph obtained from a connected 2-reflexive graph $H$ with reflexive vertices $p$ and $q$ after introducing $i$ reflexive
true twins of $p$ and $j$ reflexive true twins of $q$.
In the graph $G^\prime$ we increase the cliques $C_u$ to size $\omega+\max(i,j)$. We call the resulting graph~$G''$. Then it is readily seen
that there exists a surjective homomorphism from $G^\prime$ to~$H$ if and only if there exists a surjective homomorphism from $G''$ to
$H^{(i,j)}$.

\begin{theorem}
\label{shc:cor:the_theorem}
For every connected $2$-reflexive graph $H$ and integers $i,j\geq 0$, {\sc Surjective $H^{(i,j)}$-Colouring} is \NP-complete.
\end{theorem}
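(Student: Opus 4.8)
The plan is to reduce from $(r_p,r_q)$-\textsc{Factor Cut with Roots}, where $r_p$ and $r_q$ are the quantities defined for the \emph{base} graph $H$; this problem is \NP-complete by Theorem~\ref{shc:the_fc_theorem}, and membership of \textsc{Surjective $H^{(i,j)}$-Colouring} in \NP\ is immediate. First I would record that $H^{(i,j)}$ is connected and that its reflexive vertices split into the two true-twin classes $P=\{p,p_1,\dots,p_i\}$ and $Q=\{q,q_1,\dots,q_j\}$, which are non-adjacent: since $p$ and $q$ are non-adjacent and every twin shares the neighbourhood of the vertex it copies, no edge runs between $P$ and $Q$. I would also check that attaching the twins changes none of the quantities used in the construction: $\dist_{H^{(i,j)}}(p,q)=\ell$ and the sets $N_p,N_q$ (hence $r_p,r_q$) are unchanged, because each twin $p_k$ satisfies $\dist(p_k,q)=\ell$ and therefore lies off every shortest $p$--$q$ path. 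Given an instance $(G,s,t)$ I would build $G'$ exactly as in the proof of Theorem~\ref{shc:the_theorem-gt2} and then form $G''$ by enlarging every clique $C_u$ to size $\omega+\max(i,j)$. The goal is to show that $G$ has an $(r_p,r_q)$-factor cut if and only if there is a surjective homomorphism from $G''$ to $H^{(i,j)}$.

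Two preparatory facts drive both directions. First, the map $\phi\colon H^{(i,j)}\to H$ that collapses each twin onto the vertex it copies and fixes $V_H$ pointwise is a homomorphism, precisely because the added vertices are reflexive true twins. Second, I would bound the clique number $\omega'$ of $H^{(i,j)}$: a clique meeting $P$ is contained in $P\cup N_H(p)$ and so has at most $(i+1)+(\omega-1)=\omega+i$ vertices, symmetrically at most $\omega+j$ near $Q$, and any clique avoiding the twins has at most $\omega$ vertices; hence $\omega'\le\omega+\max(i,j)$. Consequently each enlarged clique together with one red neighbour has size $\omega+\max(i,j)+1>\omega'$, which is exactly the inequality that the hardness argument requires.

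The harder direction is to pass from a surjective homomorphism $h''\colon G''\to H^{(i,j)}$ back to a factor cut. Here I would set $g=\phi\circ h''$, which is a surjective homomorphism from $G''$ onto $H$ because $\phi$ is the identity on $V_H$. Since each clique $C_u$ together with a red neighbour has more than $\omega'$ vertices, $h''$ must send two adjacent vertices of it to a common vertex, which (carrying a loop) must be reflexive and hence lie in $P\cup Q$; the image clique then lies inside $P\cup N_H(p)$ or $Q\cup N_H(q)$, so $g(C_u)$ is a clique containing $p$ or containing $q$, and not both, since $P$ and $Q$ are non-adjacent. With this in hand I would re-run, essentially verbatim with $g$ in the role of $h$, the second half of the proof of Theorem~\ref{shc:the_theorem-gt2}: Lemma~\ref{shc:claim1} applies to $g\colon G''\to H$ unchanged, since only the clique sizes grew, giving cliques $C_a,C_b$ with $p\in g(C_a)$ and $q\in g(C_b)$, so that $V_1=\{u\in V_G: p\in g(C_u)\}$ and $V_2=V_G\setminus V_1$ form a partition into two non-empty parts; and the red-neighbour clique argument bounds the number of crossing edges at each $u\in V_1$ by $r_p$ and at each $u\in V_2$ by $r_q$, exhibiting an $(r_p,r_q)$-factor cut.

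For the forward direction I would take a factor cut $(V_1,V_2)$ with $s\in V_1$ and $t\in V_2$ and first define $h''$ exactly as the homomorphism built in the forward direction of Theorem~\ref{shc:the_theorem-gt2}, sending each $t^1_x$ to $x$ and each $C_u$ entirely into $\{p\}$ or $\{q\}$ with the crossing paths routed along shortest $p$--$q$ paths. To achieve surjectivity onto the twins I would then, in one clique $C_a$ with $a\in V_1$, redirect $i$ of its surplus vertices onto $p_1,\dots,p_i$, and in one clique $C_b$ with $b\in V_2$ redirect $j$ surplus vertices onto $q_1,\dots,q_j$; this is legitimate because true twins share closed neighbourhoods, so every neighbour of $C_a$ already maps into $N_{H^{(i,j)}}[p_k]$, and it is possible because $\max(i,j)\ge i$, $\max(i,j)\ge j$ and both $V_1,V_2$ are non-empty. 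I expect the main obstacle to be the interplay hidden in the single choice of clique size: the cliques must be large enough to host all the twins for surjectivity, yet small enough that after adding a red neighbour they still exceed $\omega'$ and force a reflexive image. The bound $\omega'\le\omega+\max(i,j)$ is what reconciles these two demands, and pinning down that estimate, rather than either direction of the equivalence on its own, is the crux of the argument.
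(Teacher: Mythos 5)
Your proposal is correct and follows essentially the same route as the paper: the paper's proof is exactly the construction you describe (enlarge every clique $C_u$ of $G'$ to size $\omega+\max(i,j)$ to form $G''$) together with the claim, left as ``readily seen,'' that this works; you simply supply the details the paper omits, namely the collapsing homomorphism $\phi$, the bound $\omega'\le\omega+\max(i,j)$, the observation that Lemma~\ref{shc:claim1} and the red-neighbour argument survive the enlargement, and the twin-redirection for surjectivity.
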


\section{Target Graphs Of At Most Four Vertices}\label{s-four}

In this section we classify the computational complexity of {\sc Surjective $H$-Colouring} for every target graph~$H$ with at most four vertices.
We require a number of lemmas. 
The first lemma is
proved for compaction and not vertex-surjection. However, the only property of compaction used is vertex-surjection and so it is easy to see it holds in this modified form. The second lemma is also displayed in Figure~\ref{shc:fig:comp-rel}.

\begin{lemma}[\cite{Vi05}]\label{l:2.6}
Let $H$ be a graph with connected components $H_1,\ldots,H_s$. If {\sc Surjective $H_i$-Colouring} is \NP-complete for some $i$, then \shc{} is also \NP-complete.
\end{lemma}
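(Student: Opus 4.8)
The plan is to establish \NP-hardness by a reduction from {\sc Surjective $H_i$-Colouring}, which is \NP-complete by hypothesis; membership of \shc{} in \NP{} is clear, as one may guess a vertex mapping and check in polynomial time that it is a surjective homomorphism. Given an instance $G$ of {\sc Surjective $H_i$-Colouring}, I would take $G'$ to be the disjoint union of $G$ with one fresh copy of each of the remaining components, that is, $G' = G \sqcup \bigsqcup_{j \neq i} H_j$. Since $H$ is fixed, each $H_j$ has constant size and $G'$ is computable in polynomial time. It then suffices to prove that $G$ admits a surjective homomorphism to $H_i$ if and only if $G'$ admits a surjective homomorphism to $H$. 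The forward direction is immediate: given a surjective homomorphism $h \colon G \to H_i$, extend it by mapping each added copy of $H_j$ identically onto the component $H_j$ of $H$; as every $H_j$ is an induced subgraph of $H$ this is a homomorphism, and it is surjective because $h$ already covers $V(H_i)$ while the copies cover every other $V(H_j)$.

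For the converse, suppose $\psi \colon G' \to H$ is a surjective homomorphism. I would use two standard facts: a homomorphic image of a connected graph is connected, so every connected component of $G'$ is sent by $\psi$ into a single component of $H$; and a component of $H$ can be covered only by those components of $G'$ mapped into it. Assuming $G$ is connected (see below), the components of $G'$ are exactly $G$ together with the $s-1$ added copies, so $G'$ has precisely $s$ components, matching the $s$ components of $H$. Since surjectivity forces each component of $H$ to receive at least one component of $G'$, a pigeonhole argument shows the correspondence is a bijection: every component $H_k$ receives exactly one component of $G'$, which must therefore map \emph{surjectively} onto $H_k$. To finish, consider the unique component of $G'$ mapped onto $H_i$. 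If it is $G$, then $\psi|_G$ is the desired surjective homomorphism $G \to H_i$. Otherwise it is a copy $H_{b_1}$, giving a surjective homomorphism $H_{b_1} \to H_i$; the ``home'' component $H_{b_1}$ is itself covered by a unique component of $G'$, and so on. Because each copy is used at most once, tracing this chain visits distinct components and so must terminate at $G$, yielding a sequence of surjective homomorphisms $G \to H_{b_m} \to \cdots \to H_{b_1} \to H_i$ whose composition is a surjective homomorphism $G \to H_i$, as required.

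The one delicate point --- and the main obstacle --- is this converse, where the counting argument relies on $G$ being connected. This hypothesis is essential: for disconnected $G$, a component carrying no homomorphism to $H_i$ could ``escape'' into a different component of $H$ while a copy covers $H_i$, so the bare disjoint-union reduction is not correct without it. I would handle this by reducing from {\sc Surjective $H_i$-Colouring} restricted to connected graphs, which is the form in which the hardness instances for the component problems arise (the graphs $G'$ produced by our reductions, such as that of Theorem~\ref{shc:the_theorem-gt2}, are connected). With $G$ connected, the pigeonhole-plus-composition argument above goes through and the equivalence holds.
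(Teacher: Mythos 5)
A preliminary remark: the paper never proves this lemma itself; it imports it from \cite{Vi05}, where it is proved for compaction, adding only the comment that the proof uses nothing but vertex-surjectivity. So your attempt must be judged on its own terms. Your route --- disjoint union with one copy of each remaining component, the connected-image/pigeonhole argument, and the chain of surjective homomorphisms --- is the natural one and is essentially the idea behind Vikas's proof. But in the setting of this paper it has a genuine gap: instances of \shc{} are required to be irreflexive (the paper assumes input graphs contain no self-loops), whereas your $G'$ contains a loop at every reflexive vertex of every copied component $H_j$. This is not a removable technicality, because the loops are exactly what anchors each copy to its home component, and your chain-composition step uses them: a homomorphism defined on a loopless copy of $H_{b_k}$ need not send reflexive vertices of $H_{b_k}$ to reflexive vertices, so the composed map $G \to H_{b_m} \to \cdots \to H_i$ can fail to be a homomorphism. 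Concretely, if you repair the construction by taking loopless copies, the reduction becomes false: let $H = K_3 \sqcup K_3^*$ (irreflexive triangle plus reflexive triangle), which satisfies the hypothesis of the lemma because {\sc Surjective $K_3$-Colouring} is \NP-complete (Lemma~\ref{l-gps0}), and let $G = K_4$, which is connected. Then $G$ admits no homomorphism to $K_3$ at all, yet $G' = K_4 \sqcup K_3$ maps surjectively onto $H$: send $K_4$ onto $K_3^*$ (any map into a reflexive clique is a homomorphism) and the loopless copy $K_3$ identically onto the irreflexive triangle. So proving the lemma under this paper's conventions requires replacing the loops by irreflexive gadgets that still force each copy onto its own component --- that is where the real work lies --- and the paper's applications do hit this case, e.g.\ $H$ equal to the irreflexive triangle together with an isolated reflexive vertex, one of the disconnected four-vertex targets handled via this lemma.

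The second weakness you identified yourself: the converse needs $G$ connected, and the bare disjoint-union reduction indeed fails for disconnected $G$ (your ``escape'' scenario is real: with $H$ the disjoint union of $K_3$ and a single reflexive vertex, and $G = K_3 \sqcup K_4$, the graph $G'$ is a yes-instance while $G$ is a no-instance). But your fix quietly strengthens the hypothesis: the lemma assumes only that {\sc Surjective $H_i$-Colouring} is \NP-complete, not that it is \NP-complete on connected instances, and you give no general argument that the former implies the latter. For the uses made in this paper the restriction is harmless, since the component hardness results being plugged in (Lemma~\ref{l-gps}, Theorem~\ref{shc:the_theorem-gt2}) are established via connected instances; still, a complete write-up must either build the connectivity assumption into the statement being proved or justify why it can be made without loss of generality.
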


\begin{lemma}[\cite{BKM12}]
\label{l-third}
For every graph $H$, if {\sc $H$-Compaction} is polynomial-time solvable, then \shc{} is polynomial-time solvable.
\end{lemma}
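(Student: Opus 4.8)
The plan is to convert the global vertex-surjectivity condition into the edge-surjectivity condition that compaction checks. The conceptual gap to bridge is that a surjective homomorphism need not be a compaction: for instance, the graph consisting of one edge together with one isolated vertex maps surjectively onto the irreflexive path on three vertices (send the edge onto one edge of the path and the isolated vertex onto the remaining endpoint), yet it cannot compact onto it, since a single edge cannot cover both edges of the path. Thus there is no hope of an equivalence between surjectivity and compaction to $H$ itself, and any reduction must engineer the instance (and, as it turns out, the target) so that ``every vertex of $H$ is hit'' becomes ``every edge of the target is covered''.

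The natural way to do this is a pendant construction. First I would build an augmented target $H^+$ from $H$ by attaching to every vertex $v$ a new, irreflexive degree-one vertex $p_v$; the pendant edge $vp_v$ can then be covered only if some vertex of the instance is mapped onto $v$, so covering all pendant edges is exactly surjectivity onto $V_H$. Correspondingly I would augment the instance $G$ to $G^+$ by attaching a pendant to each of its vertices, extending any surjective homomorphism $f$ by sending the pendant at $u$ to $p_{f(u)}$. To relieve the compaction of the obligation to cover the \emph{original} edges of $H$ (which a surjective homomorphism may well leave uncovered, as in the example above), I would attach a separate edge-covering gadget whose only job is to realise every non-loop edge of $H$. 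With such a construction the two directions should then be routine: a surjective homomorphism $G\to H$ extends to a compaction $G^+\to H^+$, and conversely a compaction $G^+\to H^+$ restricts (after composing with a retraction of $H^+$ back onto $H$) to a surjective homomorphism $G\to H$.

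The hard part will be the reflexive vertices of $H$, and this is exactly where the delicacy of the \shc{} problem shows up. Near a loop, an auxiliary edge can be mapped so that one endpoint goes to a reflexive vertex $v$ and the other collapses onto the pendant $p_v$; this lets the edge-covering gadget \emph{cheat}, covering pendant edges on its own and thereby faking a surjectivity that the instance $G$ does not actually possess. So the main obstacle is to make the edge-covering gadget rigid enough that it cannot reach the pendant region through the loops, while still forcing every original edge of $H$ to be covered. A second, more structural obstacle is that the reduction naturally lands on the augmented target $H^+$ rather than on $H$; to conclude I would need to transfer polynomial-time solvability of {\sc $H$-Compaction} to {\sc $H^+$-Compaction}, for which the natural tool is the retraction $H^+\to H$ (which exists whenever $H$ is connected with at least two vertices, since each $p_v$ may be folded onto a neighbour of $v$) together with the known implication that polynomial-time solvability of retraction yields it for compaction. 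Once both obstacles are handled, composing the reduction with the assumed polynomial-time algorithm for {\sc $H$-Compaction} solves \shc{} in polynomial time, since $H$ (and hence $H^+$) is fixed and the whole construction is polynomial in the size of $G$.
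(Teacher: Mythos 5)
The paper never proves this lemma at all --- it is imported wholesale from~\cite{BKM12} --- so your argument has to stand on its own, and it does not: there are two independent, fatal gaps. The first is structural. Your reduction produces instances of {\sc $H^+$-Compaction}, while the hypothesis only supplies a polynomial-time algorithm for {\sc $H$-Compaction}, and the bridge you propose between the two is not available. Vikas's theorem transfers tractability from {\sc $H'$-Retraction} to {\sc $H'$-Compaction} \emph{for one and the same target $H'$}; to apply it to $H^+$ you would need {\sc $H^+$-Retraction} to be in \PP, and nothing you have assumed gives that. From {\sc $H$-Compaction} being in \PP{} one cannot even conclude that {\sc $H$-Retraction} is in \PP{} --- the introduction of this very paper records that implication as open --- let alone retraction to an augmented target; the graph-theoretic fold $H^+\to H$ is simply irrelevant to this algorithmic transfer. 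Indeed there is no reason why {\sc $H^+$-Compaction} should not be strictly \emph{harder} than {\sc $H$-Compaction}: pendant vertices act as rigidifying constants, which is precisely the effect one exploits in compaction \emph{hardness} constructions. Since the lemma's hypothesis concerns $H$ only, a correct proof must make do with compaction queries to $H$ itself.

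The second gap is that your claimed equivalence is false, and the failure is not confined to reflexive vertices as you suggest. Take $H=P_3$, the irreflexive path $a\,b\,c$, and $G=K_2$, a single edge $uv$; then $G$ has no surjective homomorphism to $H$. Here $H^+$ is the tree obtained by adding pendants $p_a,p_b,p_c$, and $G^+$ is the path $q_u\,u\,v\,q_v$ together with your edge-covering gadget (say one disjoint edge per edge of $H$). Map the path onto $p_a\,a\,b\,c$ and the two gadget edges onto the pendant edges $bp_b$ and $cp_c$: every vertex and every edge of $H^+$ is covered, so $G^+$ compacts to $H^+$, although $G$ is a no-instance. Note the two cheats: a pendant of $G^+$ lands on an \emph{original} vertex of $H$, and the gadget lands on \emph{pendant} edges of $H^+$; neither involves a loop. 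This is unavoidable rather than an artefact of the gadget chosen: $H^+$ admits endomorphisms folding everything onto a single pendant edge (for instance, fix $a$ and $p_a$ and send $b\mapsto p_a$, $c\mapsto a$, $p_b\mapsto a$, $p_c\mapsto p_a$), so any gadget capable of covering all edges of $H$ --- which your forward direction requires --- also has, after composition with such a fold, homomorphisms covering pendant edges, and can therefore fake exactly the surjectivity that $G$ was supposed to certify. (The same computation goes through if the gadget is a connected copy of $H$: map $G^+$'s path onto $p_a\,a\,b\,p_b$ and the gadget onto $b\,c\,p_c$.) So no amount of rigidifying ``at the loops'' rescues the construction; the route through a pendant-augmented target is the wrong one for this statement.
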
 

We also need two results of Golovach, Paulusma and Song. Recall that in our context a tree is a connected graph with no cycles of length at least~3.

\begin{lemma}[\cite{GPS12}]\label{l-gps0}
Let $H$ be an irreflexive non-bipartite graph. Then \shc{} is \NP-complete.
\end{lemma}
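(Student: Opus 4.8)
The plan is to prove Lemma~\ref{l-gps0}, namely that \shc{} is \NP-complete when $H$ is irreflexive and non-bipartite. Membership in \NP{} is immediate, since a surjective homomorphism is a polynomially-sized certificate that can be verified in polynomial time by checking both the edge condition and surjectivity. The substance is \NP-hardness, and the natural strategy is a reduction from the (ordinary, non-surjective) {\sc $H$-Colouring} problem, which is \NP-complete for exactly this class of targets by the Hell--Ne\v{s}et\v{r}il dichotomy theorem cited earlier in the paper. The point of the reduction is that surjectivity is essentially a ``free'' extra condition we can force to hold: given an instance $G$ of {\sc $H$-Colouring}, I will build a graph $G'$ that consists of $G$ together with a fixed gadget that can always be surjectively mapped onto $H$, so that $G'$ surjectively homomorphically maps to $H$ if and only if $G$ homomorphically maps to $H$ at all.

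Concretely, first I would fix an arbitrary (not necessarily surjective) homomorphism target structure: since $H$ is irreflexive and non-bipartite it contains an odd cycle, and in particular $H$ (being connected in the relevant component, or handled componentwise via Lemma~\ref{l:2.6}) admits at least one homomorphism from some fixed small graph that hits every vertex. The cleanest gadget is to take a disjoint copy of $H$ itself: let $G'$ be the disjoint union of $G$ and a copy of $H$. Then the identity map on the copy of $H$ is a surjective homomorphism onto $H$, so any homomorphism $G \to H$ extends to a surjective homomorphism $G' \to H$; conversely, any homomorphism $G' \to H$ restricts to a homomorphism $G \to H$. This gives the equivalence ``$G \to H$ iff $G' \xrightarrow{\text{surj}} H$'' and the reduction is clearly polynomial. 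I would note that $G'$ is irreflexive (as required of input graphs), since both $G$ and $H$ are irreflexive, so the reduction is legitimate.

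The one wrinkle to address is connectivity of the input, but the problem does not require $G'$ to be connected, so the disjoint-union gadget is admissible as stated. If a connected instance were desired, I would instead attach the copy of $H$ to $G$ through a path or an edge that respects the homomorphism (choosing any edge $xy \in E_H$ and any edge of $G$, or subdividing suitably so the odd-girth structure is preserved), but this refinement is unnecessary here.

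The step I expect to require the most care is verifying the two directions of the equivalence cleanly, in particular confirming that a homomorphism on the disjoint union decomposes exactly into independent homomorphisms on each component, so that surjectivity of the whole forces the copy of $H$ to cover $V_H$ while placing no constraint on $G$ beyond homomorphic mappability. This is routine but is the logical heart of the argument. Everything else---membership in \NP, polynomiality of the construction, and the invocation of the Hell--Ne\v{s}et\v{r}il theorem to supply \NP-hardness of the source problem---is immediate.
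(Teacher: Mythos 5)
Your proposal is correct and matches the paper's approach: this lemma is cited from Golovach, Paulusma and Song, and the paper notes that the straightforward reduction is from {\sc $H$-Colouring}, which is \NP-complete for irreflexive non-bipartite $H$ by the Hell--Ne\v{s}et\v{r}il dichotomy theorem. Your disjoint-union gadget (adding a copy of $H$ to the instance $G$ to make surjectivity free) is exactly that straightforward reduction, and both directions of your equivalence are verified correctly.
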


\begin{lemma}[\cite{GPS12}]\label{l-gps}
Let $H$ be a tree. Then \shc{} is solvable in polynomial time if $H$ is loop-connected and \NP-complete otherwise.
\end{lemma}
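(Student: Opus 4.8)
The plan is to prove the two directions of the dichotomy separately: the polynomial case follows from the implication chain already recorded in Figure~\ref{shc:fig:comp-rel}, while the hardness case uses a direct reduction that specialises the construction of Section~\ref{s-hard} to trees.

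For the polynomial direction, suppose $H$ is a loop-connected tree. A tree has no cycle, so it is in particular a pseudo-forest, and loop-connectedness says precisely that its reflexive vertices induce a connected subgraph. Hence $H$ avoids the sole \NP-hard case in the classification of $H$-{\sc Retraction} for pseudo-forests of Feder et al.~\cite{FHJKN10}, so $H$-{\sc Retraction} is polynomial-time solvable for $H$. By the result of Vikas~\cite{Vi04}, $H$-{\sc Compaction} is then polynomial-time solvable as well, and by Lemma~\ref{l-third} so is \shc{}. Alternatively one can give a direct algorithm: root $H$ at its reflexive core, note that any input $G$ can always be mapped entirely into that core, and reduce surjectivity onto the irreflexive branches hanging off the core to finding, for each deepest leaf, a witnessing path in $G$, which is a polynomial-time search.

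For the hardness direction, suppose $H$ is a tree that is not loop-connected, so its reflexive vertices fall into at least two components of the reflexive-induced subgraph. I would pick reflexive vertices $p$ and $q$ lying in two distinct such components at minimum distance from one another; then the unique $p$--$q$ path in $H$ has only irreflexive internal vertices and length $\ell=\dist_H(p,q)\ge 2$, and $p,q$ are non-adjacent. Because $H$ is a tree we have $\omega=2$, and the shortest $p$--$q$ path is unique, so the neighbour sets $N_p,N_q$ are singletons and $r_p=r_q=1$. I therefore reduce from {\sc Matching Cut with Roots}, i.e.\ $(1,1)$-{\sc Factor Cut with Roots}, which is \NP-complete by Theorem~\ref{shc:the_fc_theorem}. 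Given an instance $(G,s,t)$ I build $G'$ exactly as in Section~\ref{s-hard}; this specialisation is the original reduction of~\cite{GPS12} that the general construction extends. Correctness is then the tree instance of the equivalence established in Theorem~\ref{shc:the_theorem-gt2}: every clique-gadget $C_u$ is forced to map to $p$ or to $q$, this assignment splits $V_G$ into $(V_1,V_2)$, and surjectivity onto all of $H$, captured by the copies $F_1,F_2$ of the two sides $H_1,H_2$, forces the cut edges to carry at most one edge per vertex on each side, i.e.\ a matching cut.

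The main obstacle is the correctness of the hardness reduction, and specifically its surjectivity half. One must argue, via the counting estimate of Claim~\ref{shc:required_reach} and the \emph{far vertex} $q'$ of the proof of Lemma~\ref{shc:claim1}, that no homomorphism can send \emph{all} clique-gadgets to a single side: otherwise some vertex of $H$ strictly closer to $q$ than to $p$ (the red vertex $q'$ of Figure~\ref{shc:fig:required-reach}) has no preimage, contradicting surjectivity. Handling trees with more than two reflexive components requires only that the split $(H_1,H_2)$ by distance to $p$ versus $q$ absorb the remaining reflexive vertices into $F_1$ or $F_2$, so that their coverage is automatic; verifying that these extra reflexive vertices do not interfere with the forcing of the clique-gadgets is the delicate point.
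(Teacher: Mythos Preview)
The paper does not prove Lemma~\ref{l-gps} at all: it is quoted verbatim from~\cite{GPS12} and used as a black box. So the relevant question is only whether your proposed argument stands on its own.

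Your polynomial direction is essentially fine, though the sentence ``avoids the sole \NP-hard case in the classification of~\cite{FHJKN10}'' overstates what the present paper records about that reference; you would need to check the actual dichotomy in~\cite{FHJKN10} rather than infer it from the one-line summary here.

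The genuine gap is in the hardness direction. The construction of Section~\ref{s-hard} and the proof of Lemma~\ref{shc:claim1} are written for a \emph{2-reflexive} target: the sentence ``$h$ must map at least two of its vertices to a reflexive vertex, so either to $p$ or $q$'' uses explicitly that $p$ and $q$ are the only reflexive vertices of $H$. A non-loop-connected tree can have arbitrarily many reflexive vertices, and then a clique $C_u\cup\{g_{u,e}^{\mathrm r}\}$ of size $\omega+1=3$ is free to collapse onto \emph{any} reflexive vertex of $H$, not just $p$ or $q$. Consequently the partition $V_1=\{u:p\in h(C_u)\}$, $V_2=\{u:q\in h(C_u)\}$ need not cover $V_G$, the counting argument behind Claim~\ref{shc:required_reach} no longer bounds distances to a single vertex, and the correspondence with a matching cut collapses. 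You flag this yourself (``verifying that these extra reflexive vertices do not interfere \ldots\ is the delicate point'') but do not resolve it; as written this is not a proof but an outline of where a proof would have to do real work. The original argument in~\cite{GPS12} handles arbitrary non-loop-connected trees directly and is not simply the $r_p=r_q=1$, $\omega=2$ specialisation of Section~\ref{s-hard}; the present paper says only that its construction ``uses similar ingredients'' with ``a number of differences'', and Theorem~\ref{shc:cor:the_theorem} extends it merely to twin-inflations of two reflexive vertices, not to general reflexive configurations.
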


Recall that $C^*_4$ denotes the reflexive cycle on four vertices (see also Figure~\ref{f-3}).

\begin{lemma}[\cite{MP15}]\label{l-c4}
The {\sc Surjective $C^*_4$-Colouring} problem is \NP-complete.
\end{lemma}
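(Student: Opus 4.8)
The plan is to reduce \shc{} for $H=C^*_4$ to a clean vertex-partition problem and then to establish hardness of the latter. Membership in \NP{} is immediate, so the work is \NP-hardness, and its heart is a structural characterisation. Write $V(C^*_4)=\{a,b,c,d\}$ with the $4$-cycle $a,b,c,d$, so that the only non-adjacent pairs are the diagonals $\{a,c\}$ and $\{b,d\}$. Since every vertex of $C^*_4$ carries a self-loop, a map $h\colon G\to C^*_4$ is a homomorphism precisely when no edge of $G$ joins an $a$-vertex to a $c$-vertex or a $b$-vertex to a $d$-vertex, and there is no further constraint. Setting $P=h^{-1}(\{a,c\})$ and $Q=h^{-1}(\{b,d\})$, the forbidden diagonals force every component of $G[P]$ to be monochromatic (all $a$ or all $c$) and every component of $G[Q]$ to be monochromatic, while every edge between $P$ and $Q$ is automatically permitted, as each of $a,c$ is adjacent to each of $b,d$. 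Hence $h$ is surjective if and only if $G[P]$ realises both $a$ and $c$ and $G[Q]$ realises both $b$ and $d$, which happens exactly when each of $G[P]$ and $G[Q]$ is disconnected. So I would first prove that \shc{} for $C^*_4$ is equivalent to deciding whether $V_G$ can be partitioned into two parts, each inducing a disconnected subgraph.

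It then remains to show this partition problem is \NP-hard. The plan is to reduce from a suitable \NP-complete Boolean problem such as (monotone) not-all-equal $3$-satisfiability, equivalently hypergraph $2$-colouring. I would build a variable gadget whose two ``sides'' correspond to the parts $P$ and $Q$ (and, within them, to the colour pairs $\{a,c\}$ and $\{b,d\}$), together with clause gadgets that can be separated only when their literals are not all assigned to the same side. Universal-type vertices are a convenient tool here: a vertex adjacent to an entire set forces that set into a single component of its part unless the vertex is placed on the opposite side, which lets me pin chosen vertices into prescribed ``quadrants'' and so control where disconnections may occur. Alongside these I would design a global enforcement gadget guaranteeing that in any feasible partition \emph{both} parts are genuinely split into at least two components exactly when the formula is satisfiable.

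The main obstacle is precisely this global requirement. Because $C^*_4$ is reflexive, there are no edge-level colouring obligations to drive a standard reduction; all the difficulty is concentrated in the surjectivity condition, which the characterisation above translates into the simultaneous disconnectedness of \emph{both} sides of one partition. This is exactly why the usual route --- lifting \NP-hardness of $C^*_4$-\textsc{Compaction} or $C^*_4$-\textsc{Retraction} via the implications of Figure~\ref{shc:fig:comp-rel} --- breaks down, and why a dedicated construction is needed: the reduction must couple the two disconnection requirements to the two polarities of the source instance tightly enough that neither side can be split ``for free''. Verifying the equivalence in both directions, and in particular checking that a satisfying assignment always yields two simultaneously disconnected parts while an unsatisfiable instance never does, is the step I expect to demand the most care.
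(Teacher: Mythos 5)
This lemma is not proved in the paper at all: it is imported verbatim from Martin and Paulusma \cite{MP15} (and was independently announced by Vikas), so the benchmark for your attempt is the proof given there. Your opening step is correct and coincides exactly with the known route. Writing $V(C^*_4)=\{a,b,c,d\}$ with non-adjacent diagonals $\{a,c\}$ and $\{b,d\}$, a homomorphism from $G$ to $C^*_4$ is precisely a partition $(P,Q)$ of $V_G$ in which each component of $G[P]$ is mapped wholly to $a$ or wholly to $c$, and each component of $G[Q]$ wholly to $b$ or to $d$; hence a \emph{surjective} homomorphism exists if and only if $V_G$ can be partitioned into two sets each inducing a disconnected subgraph. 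For connected $G$ this is the \emph{disconnected cut} problem, and this equivalence is indeed the starting point of \cite{MP15}.

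The gap is everything after that. What you offer for the hardness of the partition problem is a plan, not a proof: the variable gadget, the clause gadget and the ``global enforcement gadget'' are never constructed, and both directions of correctness are deferred. This is not a routine omission, since that hardness is the entire content of \cite{MP15}, which resolved a long-standing open problem; the present paper itself stresses that for $C^*_4$ ``a totally new proof was required''. Moreover, there is a concrete obstruction that your sketch never confronts: every connected graph of diameter at least $3$ is automatically a yes-instance. Indeed, take a vertex $u$ of eccentricity at least $3$ and let $U$ be the set of vertices at odd distance from $u$; since edges of $G$ join only vertices in the same or in consecutive BFS layers, $G[U]$ falls apart into its odd layers (with $L_1$ and $L_3$ nonempty and non-adjacent) and $G[V_G\setminus U]$ into its even layers (with $L_0$ and $L_2$ nonempty and non-adjacent), so both parts are disconnected. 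Consequently, any correct reduction must output graphs of diameter $2$, for otherwise unsatisfiable formulas would map to yes-instances; and in a diameter-$2$ graph every two non-adjacent vertices share a common neighbour, so variable and clause gadgets cannot be kept locally independent in the way standard NAE-$3$-SAT reductions require. Note also the tension in your proposed enforcement tool: a vertex adjacent to all other vertices forces its own part to be connected (so a dominating vertex excludes any valid partition at all), which means such enforcers work directly against the disconnectedness you must simultaneously create on both sides. Engineering a reduction inside these constraints is exactly what \cite{MP15} accomplishes, and it is the part your proposal leaves unresolved.
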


\begin{figure*}[ht]
	\centering
	
	\begin{subfigure}[t]{0.24\textwidth}
		\centering
		\begin{tikzpicture}[scale=0.5, baseline=0.36cm]

			\coordinate (a)  at (0, 2);
			\coordinate (b)  at (2, 2);
			\coordinate (c)  at (0, 0);
			\coordinate (d)  at (2, 0);
	
			\fill (a) circle[radius=3pt];
			\fill (b) circle[radius=3pt];
			\fill (c) circle[radius=3pt];
			\fill (d) circle[radius=3pt];
	
			\draw (a) -- (b) -- (d) -- (c) -- (a);
	
			\path[-] (a) edge  [in=165,out=105,loop] node {} ();
			\path[-] (b) edge  [in= 15,out= 75,loop] node {} ();
			\path[-] (c) edge  [in=195,out=255,loop] node {} ();
			\path[-] (d) edge  [in=285,out=345,loop] node {} ();
	
			\path[use as bounding box] (-1.5, -1.5) rectangle (3.5, 3.5);

		\end{tikzpicture}
	\end{subfigure}
	~
	\begin{subfigure}[t]{0.24\textwidth}
		\centering
		\begin{tikzpicture}[scale=0.5, baseline=0.36cm]

			\coordinate (a)  at (0, 2);
			\coordinate (b)  at (2, 2);
			\coordinate (c)  at (0, 0);
			\coordinate (d)  at (2, 0);
	
			\fill (a) circle[radius=3pt];
			\fill (b) circle[radius=3pt];
			\fill (c) circle[radius=3pt];
			\fill (d) circle[radius=3pt];
	
			\draw (a) -- (b) -- (d) -- (c) -- (a);
			\draw (a) -- (d);
	
			\path[use as bounding box] (-1.5, -1.5) rectangle (3.5, 3.5);

		\end{tikzpicture}
	\end{subfigure}
	~
	\begin{subfigure}[t]{0.24\textwidth}
		\centering
		\begin{tikzpicture}[scale=0.5, baseline=0.36cm]

			\coordinate (a)  at ( -1.732, 2);
			\coordinate (b)  at ( -1.732, 0);
			\coordinate (c)  at ( 0, 1);
			\coordinate (d)  at ( 2, 1);
	
			\fill (a) circle[radius=3pt];
			\fill (b) circle[radius=3pt];
			\fill (c) circle[radius=3pt];
			\fill (d) circle[radius=3pt];
	
			\draw (a) -- (b) -- (c) -- (a);
			\draw (c) -- (d);
	
			\path[-] (d) edge  [in=60,out=120,loop] node {} ();
	
			\path[use as bounding box] (-3.232, -1.5) rectangle (3.5, 3.5);

		\end{tikzpicture}
	\end{subfigure}
	
	\caption{The graphs $C_4^{*}$, $D$ and $\mathrm{paw}^{*}$.}\label{f-3}
\end{figure*}
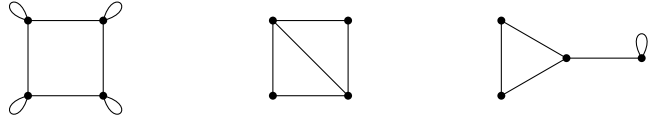

We let $D$ denote the irreflexive diamond, that is, the irreflexive complete graph on four vertices minus an edge.
The (irreflexive) paw is the graph obtained from the triangle after attaching a pendant vertex to one of the vertices of the triangle, that is,
the graph with vertices $x_1$, $x_2$, $y$, $z$ and edges $x_1x_2$, $x_1y$, $x_2y$, $yz$.
We let $\mbox{paw}^*$ denote the graph obtained from the paw after adding a loop to its vertex of degree~1 (that is, following the above notation, the loop $zz$).
Both $D$ and $\mbox{paw}^*$ are displayed in Figure~\ref{f-3} as well. 

\begin{figure*}[ht]
	\centering

	\begin{subfigure}[t]{0.24\textwidth}
		\centering
		\begin{tikzpicture}[scale=0.5, baseline=0.36cm]

			\coordinate (a)  at (0, 2);
			\coordinate (b)  at (2, 2);
			\coordinate (c)  at (0, 0);
			\coordinate (d)  at (2, 0);
	
			\fill (a) circle[radius=3pt];
			\fill (b) circle[radius=3pt];
			\fill (c) circle[radius=3pt];
			\fill (d) circle[radius=3pt];
	
			\draw (a) -- (b) -- (d) -- (c) -- (a);
		
			\path[use as bounding box] (-1.5, -1.5) rectangle (3.5, 3.5);

		\end{tikzpicture}
		\caption{\PP}
	\end{subfigure}
	~
	\begin{subfigure}[t]{0.24\textwidth}
		\centering
		\begin{tikzpicture}[scale=0.5, baseline=0.36cm]

			\coordinate (a)  at (0, 2);
			\coordinate (b)  at (2, 2);
			\coordinate (c)  at (0, 0);
			\coordinate (d)  at (2, 0);
	
			\fill (a) circle[radius=3pt];
			\fill (b) circle[radius=3pt];
			\fill (c) circle[radius=3pt];
			\fill (d) circle[radius=3pt];
	
			\draw (a) -- (b) -- (d) -- (c) -- (a);
	
			\path[-] (a) edge  [in=165,out=105,loop] node {} ();
	
			\path[use as bounding box] (-1.5, -1.5) rectangle (3.5, 3.5);

		\end{tikzpicture}
		\caption{\PP}
	\end{subfigure}
	~
	\begin{subfigure}[t]{0.24\textwidth}
		\centering
		\begin{tikzpicture}[scale=0.5, baseline=0.36cm]

			\coordinate (a)  at (0, 2);
			\coordinate (b)  at (2, 2);
			\coordinate (c)  at (0, 0);
			\coordinate (d)  at (2, 0);
	
			\fill (a) circle[radius=3pt];
			\fill (b) circle[radius=3pt];
			\fill (c) circle[radius=3pt];
			\fill (d) circle[radius=3pt];
	
			\draw (a) -- (b) -- (d) -- (c) -- (a);
	
			\path[-] (a) edge  [in=165,out=105,loop] node {} ();
			\path[-] (b) edge  [in= 15,out= 75,loop] node {} ();
	
			\path[use as bounding box] (-1.5, -1.5) rectangle (3.5, 3.5);

		\end{tikzpicture}
		\caption{\PP}
	\end{subfigure}
	~
	\begin{subfigure}[t]{0.24\textwidth}
		\centering
		\begin{tikzpicture}[scale=0.5, baseline=0.36cm]

			\coordinate (a)  at (0, 2);
			\coordinate (b)  at (2, 2);
			\coordinate (c)  at (0, 0);
			\coordinate (d)  at (2, 0);
	
			\fill (a) circle[radius=3pt];
			\fill (b) circle[radius=3pt];
			\fill (c) circle[radius=3pt];
			\fill (d) circle[radius=3pt];
	
			\draw (a) -- (b) -- (d) -- (c) -- (a);
	
			\path[-] (a) edge  [in=165,out=105,loop] node {} ();
			\path[-] (d) edge  [in=285,out=345,loop] node {} ();
	
			\path[use as bounding box] (-1.5, -1.5) rectangle (3.5, 3.5);

		\end{tikzpicture}
		\caption{\NP-complete}
	\end{subfigure}
	~
	\begin{subfigure}[t]{0.24\textwidth}
		\centering
		\begin{tikzpicture}[scale=0.5, baseline=0.36cm]

			\coordinate (a)  at (0, 2);
			\coordinate (b)  at (2, 2);
			\coordinate (c)  at (0, 0);
			\coordinate (d)  at (2, 0);
	
			\fill (a) circle[radius=3pt];
			\fill (b) circle[radius=3pt];
			\fill (c) circle[radius=3pt];
			\fill (d) circle[radius=3pt];
	
			\draw (a) -- (b) -- (d) -- (c) -- (a);
			
			\path[-] (b) edge  [in= 15,out= 75,loop] node {} ();
			\path[-] (c) edge  [in=195,out=255,loop] node {} ();
			\path[-] (d) edge  [in=285,out=345,loop] node {} ();
	
			\path[use as bounding box] (-1.5, -1.5) rectangle (3.5, 3.5);

		\end{tikzpicture}
		\caption{\PP}
	\end{subfigure}
	~
	\begin{subfigure}[t]{0.24\textwidth}
		\centering
		\begin{tikzpicture}[scale=0.5, baseline=0.36cm]

			\coordinate (a)  at (0, 2);
			\coordinate (b)  at (2, 2);
			\coordinate (c)  at (0, 0);
			\coordinate (d)  at (2, 0);
	
			\fill (a) circle[radius=3pt];
			\fill (b) circle[radius=3pt];
			\fill (c) circle[radius=3pt];
			\fill (d) circle[radius=3pt];
	
			\draw (a) -- (b) -- (d) -- (c) -- (a);
	
			\path[-] (a) edge  [in=165,out=105,loop] node {} ();
			\path[-] (b) edge  [in= 15,out= 75,loop] node {} ();
			\path[-] (c) edge  [in=195,out=255,loop] node {} ();
			\path[-] (d) edge  [in=285,out=345,loop] node {} ();
	
			\path[use as bounding box] (-1.5, -1.5) rectangle (3.5, 3.5);

		\end{tikzpicture}
		\caption{\NP-complete}
	\end{subfigure}
	~
	
	\caption{All cycles $H$ on four vertices.}
	\label{f-cycles}
\end{figure*}
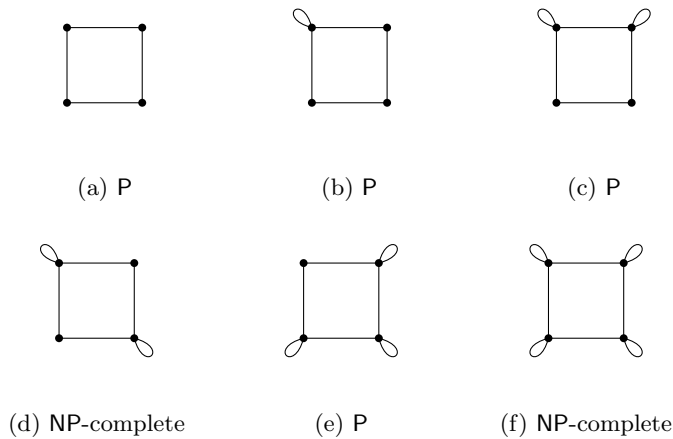

\begin{figure*}[ht]
	\centering

	\begin{subfigure}[t]{0.24\textwidth}
		\centering
		\begin{tikzpicture}[scale=0.5, baseline=0.36cm]

			\coordinate (a)  at (0, 2);
			\coordinate (b)  at (2, 2);
			\coordinate (c)  at (0, 0);
			\coordinate (d)  at (2, 0);
	
			\fill (a) circle[radius=3pt];
			\fill (b) circle[radius=3pt];
			\fill (c) circle[radius=3pt];
			\fill (d) circle[radius=3pt];
	
			\draw (a) -- (b) -- (d) -- (c) -- (a);
			\draw (a) -- (d) -- (c) -- (b);
	
			\path[use as bounding box] (-1.5, -1.5) rectangle (3.5, 3.5);

		\end{tikzpicture}
		\caption{\NP-complete}
	\end{subfigure}
	~
	\begin{subfigure}[t]{0.24\textwidth}
		\centering
		\begin{tikzpicture}[scale=0.5, baseline=0.36cm]

			\coordinate (a)  at (0, 2);
			\coordinate (b)  at (2, 2);
			\coordinate (c)  at (0, 0);
			\coordinate (d)  at (2, 0);
	
			\fill (a) circle[radius=3pt];
			\fill (b) circle[radius=3pt];
			\fill (c) circle[radius=3pt];
			\fill (d) circle[radius=3pt];
	
			\draw (a) -- (b) -- (d) -- (c) -- (a);
			\draw (a) -- (d) -- (c) -- (b);
	
			\path[-] (a) edge  [in=165,out=105,loop] node {} ();
	
			\path[use as bounding box] (-1.5, -1.5) rectangle (3.5, 3.5);

		\end{tikzpicture}
		\caption{\PP}
	\end{subfigure}
	~
	\begin{subfigure}[t]{0.24\textwidth}
		\centering
		\begin{tikzpicture}[scale=0.5, baseline=0.36cm]

			\coordinate (a)  at (0, 2);
			\coordinate (b)  at (2, 2);
			\coordinate (c)  at (0, 0);
			\coordinate (d)  at (2, 0);
	
			\fill (a) circle[radius=3pt];
			\fill (b) circle[radius=3pt];
			\fill (c) circle[radius=3pt];
			\fill (d) circle[radius=3pt];
	
			\draw (a) -- (b) -- (d) -- (c) -- (a);
			\draw (a) -- (d) -- (c) -- (b);
	
			\path[-] (a) edge  [in=165,out=105,loop] node {} ();
			\path[-] (b) edge  [in= 15,out= 75,loop] node {} ();
	
			\path[use as bounding box] (-1.5, -1.5) rectangle (3.5, 3.5);

		\end{tikzpicture}
		\caption{\PP}
	\end{subfigure}
	~
	\begin{subfigure}[t]{0.24\textwidth}
		\centering
		\begin{tikzpicture}[scale=0.5, baseline=0.36cm]

			\coordinate (a)  at (0, 2);
			\coordinate (b)  at (2, 2);
			\coordinate (c)  at (0, 0);
			\coordinate (d)  at (2, 0);
	
			\fill (a) circle[radius=3pt];
			\fill (b) circle[radius=3pt];
			\fill (c) circle[radius=3pt];
			\fill (d) circle[radius=3pt];
	
			\draw (a) -- (b) -- (d) -- (c) -- (a);
			\draw (a) -- (d) -- (c) -- (b);
	
			\path[-] (b) edge  [in= 15,out= 75,loop] node {} ();
			\path[-] (c) edge  [in=195,out=255,loop] node {} ();
			\path[-] (d) edge  [in=285,out=345,loop] node {} ();
	
			\path[use as bounding box] (-1.5, -1.5) rectangle (3.5, 3.5);

		\end{tikzpicture}
		\caption{\PP}
	\end{subfigure}
	~
	\begin{subfigure}[t]{0.24\textwidth}
		\centering
		\begin{tikzpicture}[scale=0.5, baseline=0.36cm]

			\coordinate (a)  at (0, 2);
			\coordinate (b)  at (2, 2);
			\coordinate (c)  at (0, 0);
			\coordinate (d)  at (2, 0);
	
			\fill (a) circle[radius=3pt];
			\fill (b) circle[radius=3pt];
			\fill (c) circle[radius=3pt];
			\fill (d) circle[radius=3pt];
	
			\draw (a) -- (b) -- (d) -- (c) -- (a);
			\draw (a) -- (d) -- (c) -- (b);
	
			\path[-] (a) edge  [in=165,out=105,loop] node {} ();
			\path[-] (b) edge  [in= 15,out= 75,loop] node {} ();
			\path[-] (c) edge  [in=195,out=255,loop] node {} ();
			\path[-] (d) edge  [in=285,out=345,loop] node {} ();
	
			\path[use as bounding box] (-1.5, -1.5) rectangle (3.5, 3.5);

		\end{tikzpicture}
		\caption{\PP}
	\end{subfigure}
	~
	
	\caption{All complete graphs $H$ on four vertices.}
	\label{f-complete}
\end{figure*}
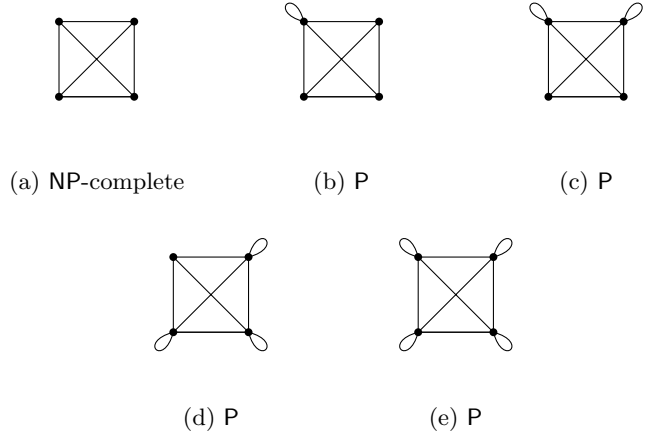

We are now ready to state our main result. 

\begin{theorem}\label{t-dom4}
Let $H$ be a graph with $|V_H|\leq 4$. Then \shc{} is \NP-complete if
some connected component of $H$ is not loop-connected or is an irreflexive complete graph on at least three vertices, or
$H\in \{C_4^*,D,\mathrm{paw}^*\}$. Otherwise \shc{} is polynomial-time solvable.
\end{theorem}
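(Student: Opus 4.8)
The plan is to reduce the whole classification to a finite case analysis driven by two external inputs. Membership in \NP{} is immediate (guess the map and check it), so only the hardness and the polynomial-time cases need argument. For hardness I would use Lemma~\ref{l:2.6}, which lets me restrict attention to a single connected component: it suffices to show that each connected graph on at most four vertices appearing in the hard list has \NP-complete \shc. For the polynomial cases I would lean entirely on Lemma~\ref{l-third} together with Vikas's classification of $H$-{\sc Compaction} on at most four vertices: for every remaining $H$ (connected or not) I verify from that classification that $H$-{\sc Compaction} is polynomial-time solvable and then invoke Lemma~\ref{l-third}. Thus the proof splits into (i) establishing hardness for the connected building blocks, and (ii) cross-checking that the polynomial boundary we assert coincides with Vikas's compaction boundary. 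I would organise the bookkeeping for (ii) around the families already drawn in Figures~\ref{f-3}, \ref{f-cycles} and~\ref{f-complete}, sweeping over all graphs on at most four vertices.

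For the hardness side I would split into four groups. First, every irreflexive non-bipartite graph on at most four vertices, namely $K_3$, $K_4$ and $D$ (and the irreflexive paw), contains a triangle and is therefore \NP-complete directly by Lemma~\ref{l-gps0}; this disposes of the ``irreflexive complete graph on at least three vertices'' clause and of $D$. Second, $C_4^*$ is \NP-complete by Lemma~\ref{l-c4}. Third, for a component that is not loop-connected I would distinguish whether it is a tree: if so, Lemma~\ref{l-gps} applies at once; if not, a short enumeration shows that on at most four vertices such a graph either has exactly two non-adjacent reflexive vertices, so that Theorem~\ref{shc:the_theorem-gt2} applies, or its extra reflexive vertex forms a pair of true twins with one of the others, so that it has the form $H'^{(i,j)}$ for a connected $2$-reflexive core $H'$ and Theorem~\ref{shc:cor:the_theorem} applies. (The key observation for the non-tree subcase is that a triangle forces the two adjacent reflexive vertices to share their outside neighbourhood and hence be true twins.) The only hard graph left is $\mathrm{paw}^*$: it is loop-connected with a single reflexive vertex, so none of the tools above reaches it, and I would instead take Vikas's \NP-hardness reduction for $\mathrm{paw}^*$-{\sc Compaction} and check that on the instances it produces a compaction exists exactly when a surjective homomorphism does, thereby lifting compaction-hardness to \shc-hardness.

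For the polynomial side, every connected $H$ on at most four vertices that is loop-connected, is not an irreflexive complete graph on at least three vertices, and does not lie in $\{C_4^*, D, \mathrm{paw}^*\}$, together with all disconnected $H$ all of whose components are of this kind, has polynomial-time $H$-{\sc Compaction} by Vikas's classification, and hence polynomial-time \shc{} by Lemma~\ref{l-third}. Running through Figures~\ref{f-cycles} and~\ref{f-complete}, the small lists of graphs on at most three vertices, and the remaining four-vertex graphs confirms that the \PP/\NP-complete split we claim agrees exactly with Vikas's, which is precisely what also gives the asserted equivalence of the three problems on these targets. The main obstacle, and the only genuinely new argument inside this proof, is the $\mathrm{paw}^*$ case: since the implication ``$H$-{\sc Compaction} hard $\Rightarrow$ \shc{} hard'' is not known in general, one cannot simply quote Vikas and must instead verify by hand that his gadget reduction respects surjectivity. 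This is exactly the ``lifting'' phenomenon flagged in the introduction, and the care lies in confirming it succeeds for $\mathrm{paw}^*$, in contrast to $C_4^*$, where lifting fails and the wholly separate proof of Lemma~\ref{l-c4} is required.
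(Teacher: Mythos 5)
Your proposal is correct and follows essentially the same route as the paper: \NP-membership plus Lemma~\ref{l:2.6} to reduce to connected components; Lemma~\ref{l-gps0} for the irreflexive non-bipartite targets, Lemma~\ref{l-gps} for non-loop-connected trees, Theorem~\ref{shc:the_theorem-gt2} and its twin-extension Theorem~\ref{shc:cor:the_theorem} for the remaining non-loop-connected cases, Lemma~\ref{l-c4} for $C_4^*$; Vikas's compaction classification combined with Lemma~\ref{l-third} for all polynomial cases; and, for $\mathrm{paw}^*$, a verification that Vikas's compaction-hardness reduction uses only vertex-surjectivity so that it lifts to {\sc Surjective $\mathrm{paw}^*$-Colouring}. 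The paper differs only in bookkeeping (it sweeps by graph family --- cycles, complete graphs, diamonds, paws --- rather than by hardness tool) and in actually carrying out the $\mathrm{paw}^*$ check by pinpointing the three places in Vikas's Lemma~3.5.2 where compaction is invoked, which your plan correctly identifies as the one genuinely new piece of work.
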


\begin{proof}
Let $H$ be a graph on at most four vertices.
If $H$ is a loop-connected forest (that is, every component of $H$ is loop-connected) or $H$ has a dominating reflexive vertex, 
then Vikas~\cite{Vi05} showed that $H$-{\sc Compaction} is in \PP. Hence, \shc{} is in \PP\ by Lemma~\ref{l-third}.
If $H$ contains a component that is a non-loop-connected tree, then \shc{} is \NP-complete by Lemmas~\ref{l:2.6} and~\ref{l-gps}.
If $H$ is an irreflexive non-bipartite graph, then \shc{} is \NP-complete by Lemma~\ref{l-gps0}.

Note that the above cases cover all graphs $H$ on at most three vertices, all disconnected graphs~$H$ on four vertices and all trees~$H$ on four vertices. The only two graphs $H$ on at most three vertices for which \shc{} is \NP-complete are the irreflexive cycle on three vertices and the 3-vertex path in which the two end-vertices are reflexive. The only disconnected graphs $H$ on four vertices for which \shc{} is \NP-complete are those that contain these two graphs as connected components. The only trees $H$ on four vertices for which \shc{} is \NP-complete are those that are not loop-connected. Hence the theorem holds for every graph~$H$ on at most three vertices, for every disconnected graph~$H$ on four vertices and for every tree~$H$ on four vertices.

From now on we assume that $H$ is a connected graph on four vertices that is not a tree. Then $H$ is either the cycle on four vertices, the complete graph on four vertices, the diamond  or the paw. 
We consider each of these cases separately.

\begin{figure*}[ht]
	\centering

	\begin{subfigure}[t]{0.24\textwidth}
		\centering
		\begin{tikzpicture}[scale=0.5, baseline=0.36cm]

			\coordinate (a)  at (0, 2);
			\coordinate (b)  at (2, 2);
			\coordinate (c)  at (0, 0);
			\coordinate (d)  at (2, 0);
	
			\fill (a) circle[radius=3pt];
			\fill (b) circle[radius=3pt];
			\fill (c) circle[radius=3pt];
			\fill (d) circle[radius=3pt];
	
			\draw (a) -- (b) -- (d) -- (c) -- (a);
			\draw (a) -- (d);
	
			\path[use as bounding box] (-1.5, -1.5) rectangle (3.5, 3.5);

		\end{tikzpicture}
		\caption{\NP-complete}
	\end{subfigure}
	~
	\begin{subfigure}[t]{0.24\textwidth}
		\centering
		\begin{tikzpicture}[scale=0.5, baseline=0.36cm]

			\coordinate (a)  at (0, 2);
			\coordinate (b)  at (2, 2);
			\coordinate (c)  at (0, 0);
			\coordinate (d)  at (2, 0);
	
			\fill (a) circle[radius=3pt];
			\fill (b) circle[radius=3pt];
			\fill (c) circle[radius=3pt];
			\fill (d) circle[radius=3pt];
	
			\draw (a) -- (b) -- (d) -- (c) -- (a);
			\draw (a) -- (d);
	
			\path[-] (a) edge  [in=165,out=105,loop] node {} ();
	
			\path[use as bounding box] (-1.5, -1.5) rectangle (3.5, 3.5);

		\end{tikzpicture}
		\caption{\PP}
	\end{subfigure}
	~
	\begin{subfigure}[t]{0.24\textwidth}
		\centering
		\begin{tikzpicture}[scale=0.5, baseline=0.36cm]

			\coordinate (a)  at (0, 2);
			\coordinate (b)  at (2, 2);
			\coordinate (c)  at (0, 0);
			\coordinate (d)  at (2, 0);
	
			\fill (a) circle[radius=3pt];
			\fill (b) circle[radius=3pt];
			\fill (c) circle[radius=3pt];
			\fill (d) circle[radius=3pt];
	
			\draw (a) -- (b) -- (d) -- (c) -- (a);
			\draw (a) -- (d);
	
			\path[-] (b) edge  [in= 15,out= 75,loop] node {} ();
	
			\path[use as bounding box] (-1.5, -1.5) rectangle (3.5, 3.5);

		\end{tikzpicture}
		\caption{\PP}
	\end{subfigure}
	~
	\begin{subfigure}[t]{0.24\textwidth}
		\centering
		\begin{tikzpicture}[scale=0.5, baseline=0.36cm]

			\coordinate (a)  at (0, 2);
			\coordinate (b)  at (2, 2);
			\coordinate (c)  at (0, 0);
			\coordinate (d)  at (2, 0);
	
			\fill (a) circle[radius=3pt];
			\fill (b) circle[radius=3pt];
			\fill (c) circle[radius=3pt];
			\fill (d) circle[radius=3pt];
	
			\draw (a) -- (b) -- (d) -- (c) -- (a);
			\draw (a) -- (d);
	
			\path[-] (a) edge  [in=165,out=105,loop] node {} ();
			\path[-] (b) edge  [in= 15,out= 75,loop] node {} ();
	
			\path[use as bounding box] (-1.5, -1.5) rectangle (3.5, 3.5);

		\end{tikzpicture}
		\caption{\PP}
	\end{subfigure}
	~
	\begin{subfigure}[t]{0.24\textwidth}
		\centering
		\begin{tikzpicture}[scale=0.5, baseline=0.36cm]

			\coordinate (a)  at (0, 2);
			\coordinate (b)  at (2, 2);
			\coordinate (c)  at (0, 0);
			\coordinate (d)  at (2, 0);
	
			\fill (a) circle[radius=3pt];
			\fill (b) circle[radius=3pt];
			\fill (c) circle[radius=3pt];
			\fill (d) circle[radius=3pt];
	
			\draw (a) -- (b) -- (d) -- (c) -- (a);
			\draw (a) -- (d);
	
			\path[-] (a) edge  [in=165,out=105,loop] node {} ();
			\path[-] (d) edge  [in=285,out=345,loop] node {} ();
	
			\path[use as bounding box] (-1.5, -1.5) rectangle (3.5, 3.5);

		\end{tikzpicture}
		\caption{\PP}
	\end{subfigure}
	~
	\begin{subfigure}[t]{0.24\textwidth}
		\centering
		\begin{tikzpicture}[scale=0.5, baseline=0.36cm]

			\coordinate (a)  at (0, 2);
			\coordinate (b)  at (2, 2);
			\coordinate (c)  at (0, 0);
			\coordinate (d)  at (2, 0);
	
			\fill (a) circle[radius=3pt];
			\fill (b) circle[radius=3pt];
			\fill (c) circle[radius=3pt];
			\fill (d) circle[radius=3pt];
	
			\draw (a) -- (b) -- (d) -- (c) -- (a);
			\draw (a) -- (d);
	
			\path[-] (b) edge  [in= 15,out= 75,loop] node {} ();
			\path[-] (c) edge  [in=195,out=255,loop] node {} ();
	
			\path[use as bounding box] (-1.5, -1.5) rectangle (3.5, 3.5);

		\end{tikzpicture}
		\caption{\NP-complete}
	\end{subfigure}
	~
	\begin{subfigure}[t]{0.24\textwidth}
		\centering
		\begin{tikzpicture}[scale=0.5, baseline=0.36cm]

			\coordinate (a)  at (0, 2);
			\coordinate (b)  at (2, 2);
			\coordinate (c)  at (0, 0);
			\coordinate (d)  at (2, 0);
	
			\fill (a) circle[radius=3pt];
			\fill (b) circle[radius=3pt];
			\fill (c) circle[radius=3pt];
			\fill (d) circle[radius=3pt];
	
			\draw (a) -- (b) -- (d) -- (c) -- (a);
			\draw (a) -- (d);
	
			\path[-] (b) edge  [in= 15,out= 75,loop] node {} ();
			\path[-] (c) edge  [in=195,out=255,loop] node {} ();
			\path[-] (d) edge  [in=285,out=345,loop] node {} ();
	
			\path[use as bounding box] (-1.5, -1.5) rectangle (3.5, 3.5);

		\end{tikzpicture}
		\caption{\PP}
	\end{subfigure}
	~
	\begin{subfigure}[t]{0.24\textwidth}
		\centering
		\begin{tikzpicture}[scale=0.5, baseline=0.36cm]

			\coordinate (a)  at (0, 2);
			\coordinate (b)  at (2, 2);
			\coordinate (c)  at (0, 0);
			\coordinate (d)  at (2, 0);
	
			\fill (a) circle[radius=3pt];
			\fill (b) circle[radius=3pt];
			\fill (c) circle[radius=3pt];
			\fill (d) circle[radius=3pt];
	
			\draw (a) -- (b) -- (d) -- (c) -- (a);
			\draw (a) -- (d);
	
			\path[-] (a) edge  [in=165,out=105,loop] node {} ();
			\path[-] (c) edge  [in=195,out=255,loop] node {} ();
			\path[-] (d) edge  [in=285,out=345,loop] node {} ();
	
			\path[use as bounding box] (-1.5, -1.5) rectangle (3.5, 3.5);

		\end{tikzpicture}
		\caption{\PP}
	\end{subfigure}
~	
	\begin{subfigure}[t]{0.24\textwidth}
		\centering
		\begin{tikzpicture}[scale=0.5, baseline=0.36cm]

			\coordinate (a)  at (0, 2);
			\coordinate (b)  at (2, 2);
			\coordinate (c)  at (0, 0);
			\coordinate (d)  at (2, 0);
	
			\fill (a) circle[radius=3pt];
			\fill (b) circle[radius=3pt];
			\fill (c) circle[radius=3pt];
			\fill (d) circle[radius=3pt];
	
			\draw (a) -- (b) -- (d) -- (c) -- (a);
			\draw (a) -- (d);
	
			\path[-] (a) edge  [in=165,out=105,loop] node {} ();
			\path[-] (b) edge  [in= 15,out= 75,loop] node {} ();
			\path[-] (c) edge  [in=195,out=255,loop] node {} ();
			\path[-] (d) edge  [in=285,out=345,loop] node {} ();
	
			\path[use as bounding box] (-1.5, -1.5) rectangle (3.5, 3.5);

		\end{tikzpicture}
		\caption{\PP}
	\end{subfigure}

	\caption{All diamonds $H$ on four vertices.}
	\label{f-diamonds}
\end{figure*}
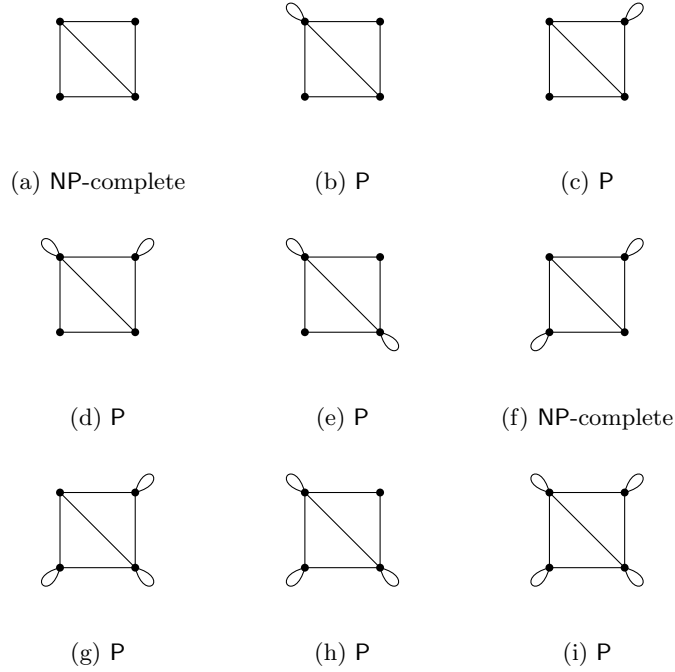

Suppose $H$ is the cycle on four vertices. There are six cases to consider (see also Figure~\ref{f-cycles}).
If $H$ is reflexive, then \shc{} is \NP-complete by Lemma~\ref{l-c4}.
If $H$ is not loop-connected, then $H$ is 2-reflexive, and thus \shc{} is \NP-complete by Theorem~\ref{shc:the_theorem-gt2}.
In the remaining 
four cases $H$ is loop-connected.
For each of these target graphs, Vikas~\cite{Vi05} showed that $H$-{\sc Compaction} is in \PP. Hence, \shc{} is in \PP\ by Lemma~\ref{l-third}. We find that the theorem holds when $H$ is a cycle on four vertices.

Suppose $H$ is the complete graph on four vertices. There are five cases to consider (see also Figure~\ref{f-complete}). If $H$ is irreflexive, then \shc{} is \NP-complete by Lemma~\ref{l-gps0} (as $H$ is non-bipartite as well). 
For each of the other four target graphs, Vikas~\cite{Vi05} showed that $H$-{\sc Compaction} is in \PP. Hence, \shc{} is in \PP\ by Lemma~\ref{l-third}.
We find that the theorem holds when $H$ is the complete graph on four vertices.

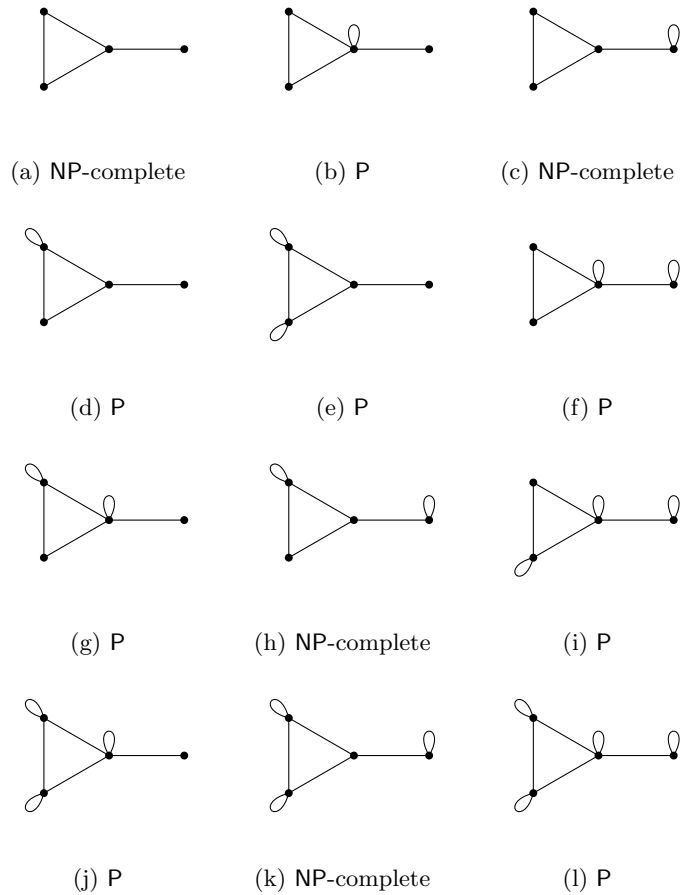
\begin{figure*}[ht]
	\centering

	\begin{subfigure}[t]{0.24\textwidth}
		\centering
		\begin{tikzpicture}[scale=0.5, baseline=0.36cm]

			\coordinate (a)  at ( -1.732, 2);
			\coordinate (b)  at ( -1.732, 0);
			\coordinate (c)  at ( 0, 1);
			\coordinate (d)  at ( 2, 1);
	
			\fill (a) circle[radius=3pt];
			\fill (b) circle[radius=3pt];
			\fill (c) circle[radius=3pt];
			\fill (d) circle[radius=3pt];
	
			\draw (a) -- (b) -- (c) -- (a);
			\draw (c) -- (d);
	
			\path[use as bounding box] (-3.232, -1.5) rectangle (3.5, 3.5);

		\end{tikzpicture}
		\caption{\NP-complete}
	\end{subfigure}
	~
	\begin{subfigure}[t]{0.24\textwidth}
		\centering
		\begin{tikzpicture}[scale=0.5, baseline=0.36cm]

			\coordinate (a)  at ( -1.732, 2);
			\coordinate (b)  at ( -1.732, 0);
			\coordinate (c)  at ( 0, 1);
			\coordinate (d)  at ( 2, 1);
	
			\fill (a) circle[radius=3pt];
			\fill (b) circle[radius=3pt];
			\fill (c) circle[radius=3pt];
			\fill (d) circle[radius=3pt];
	
			\draw (a) -- (b) -- (c) -- (a);
			\draw (c) -- (d);
	
			\path[-] (c) edge  [in=60,out=120,loop] node {} ();
	
			\path[use as bounding box] (-3.232, -1.5) rectangle (3.5, 3.5);

		\end{tikzpicture}
		\caption{\PP}
	\end{subfigure}
	~
	\begin{subfigure}[t]{0.24\textwidth}
		\centering
		\begin{tikzpicture}[scale=0.5, baseline=0.36cm]

			\coordinate (a)  at ( -1.732, 2);
			\coordinate (b)  at ( -1.732, 0);
			\coordinate (c)  at ( 0, 1);
			\coordinate (d)  at ( 2, 1);
	
			\fill (a) circle[radius=3pt];
			\fill (b) circle[radius=3pt];
			\fill (c) circle[radius=3pt];
			\fill (d) circle[radius=3pt];
	
			\draw (a) -- (b) -- (c) -- (a);
			\draw (c) -- (d);
	
			\path[-] (d) edge  [in=60,out=120,loop] node {} ();
	
			\path[use as bounding box] (-3.232, -1.5) rectangle (3.5, 3.5);

		\end{tikzpicture}
		\caption{\NP-complete}
	\end{subfigure}
	~
	\begin{subfigure}[t]{0.24\textwidth}
		\centering
		\begin{tikzpicture}[scale=0.5, baseline=0.36cm]

			\coordinate (a)  at ( -1.732, 2);
			\coordinate (b)  at ( -1.732, 0);
			\coordinate (c)  at ( 0, 1);
			\coordinate (d)  at ( 2, 1);
	
			\fill (a) circle[radius=3pt];
			\fill (b) circle[radius=3pt];
			\fill (c) circle[radius=3pt];
			\fill (d) circle[radius=3pt];
	
			\draw (a) -- (b) -- (c) -- (a);
			\draw (c) -- (d);
	
			\path[-] (a) edge  [in=165,out=105,loop] node {} ();
	
			\path[use as bounding box] (-3.232, -1.5) rectangle (3.5, 3.5);

		\end{tikzpicture}
		\caption{\PP}
	\end{subfigure}
	~	
	\begin{subfigure}[t]{0.24\textwidth}
		\centering
		\begin{tikzpicture}[scale=0.5, baseline=0.36cm]

			\coordinate (a)  at ( -1.732, 2);
			\coordinate (b)  at ( -1.732, 0);
			\coordinate (c)  at ( 0, 1);
			\coordinate (d)  at ( 2, 1);
	
			\fill (a) circle[radius=3pt];
			\fill (b) circle[radius=3pt];
			\fill (c) circle[radius=3pt];
			\fill (d) circle[radius=3pt];
	
			\draw (a) -- (b) -- (c) -- (a);
			\draw (c) -- (d);
	
			\path[-] (a) edge  [in=165,out=105,loop] node {} ();
			\path[-] (b) edge  [in=195,out=255,loop] node {} ();
	
			\path[use as bounding box] (-3.232, -1.5) rectangle (3.5, 3.5);

		\end{tikzpicture}
		\caption{\PP}
	\end{subfigure}
	~
	\begin{subfigure}[t]{0.24\textwidth}
		\centering
		\begin{tikzpicture}[scale=0.5, baseline=0.36cm]

			\coordinate (a)  at ( -1.732, 2);
			\coordinate (b)  at ( -1.732, 0);
			\coordinate (c)  at ( 0, 1);
			\coordinate (d)  at ( 2, 1);
	
			\fill (a) circle[radius=3pt];
			\fill (b) circle[radius=3pt];
			\fill (c) circle[radius=3pt];
			\fill (d) circle[radius=3pt];
	
			\draw (a) -- (b) -- (c) -- (a);
			\draw (c) -- (d);
	
			\path[-] (c) edge  [in=60,out=120,loop] node {} ();
			\path[-] (d) edge  [in=60,out=120,loop] node {} ();
	
			\path[use as bounding box] (-3.232, -1.5) rectangle (3.5, 3.5);

		\end{tikzpicture}
		\caption{\PP}
	\end{subfigure}
	~
	\begin{subfigure}[t]{0.24\textwidth}
		\centering
		\begin{tikzpicture}[scale=0.5, baseline=0.36cm]

			\coordinate (a)  at ( -1.732, 2);
			\coordinate (b)  at ( -1.732, 0);
			\coordinate (c)  at ( 0, 1);
			\coordinate (d)  at ( 2, 1);
	
			\fill (a) circle[radius=3pt];
			\fill (b) circle[radius=3pt];
			\fill (c) circle[radius=3pt];
			\fill (d) circle[radius=3pt];
	
			\draw (a) -- (b) -- (c) -- (a);
			\draw (c) -- (d);
	
			\path[-] (a) edge  [in=165,out=105,loop] node {} ();
			\path[-] (c) edge  [in=60,out=120,loop] node {} ();
	
			\path[use as bounding box] (-3.232, -1.5) rectangle (3.5, 3.5);

		\end{tikzpicture}
		\caption{\PP}
	\end{subfigure}
	~
	\begin{subfigure}[t]{0.24\textwidth}
		\centering
		\begin{tikzpicture}[scale=0.5, baseline=0.36cm]

			\coordinate (a)  at ( -1.732, 2);
			\coordinate (b)  at ( -1.732, 0);
			\coordinate (c)  at ( 0, 1);
			\coordinate (d)  at ( 2, 1);
	
			\fill (a) circle[radius=3pt];
			\fill (b) circle[radius=3pt];
			\fill (c) circle[radius=3pt];
			\fill (d) circle[radius=3pt];
	
			\draw (a) -- (b) -- (c) -- (a);
			\draw (c) -- (d);
	
			\path[-] (a) edge  [in=165,out=105,loop] node {} ();
			\path[-] (d) edge  [in=60,out=120,loop] node {} ();
	
			\path[use as bounding box] (-3.232, -1.5) rectangle (3.5, 3.5);

		\end{tikzpicture}
		\caption{\NP-complete}
	\end{subfigure}
	~
	\begin{subfigure}[t]{0.24\textwidth}
		\centering
		\begin{tikzpicture}[scale=0.5, baseline=0.36cm]

			\coordinate (a)  at ( -1.732, 2);
			\coordinate (b)  at ( -1.732, 0);
			\coordinate (c)  at ( 0, 1);
			\coordinate (d)  at ( 2, 1);
	
			\fill (a) circle[radius=3pt];
			\fill (b) circle[radius=3pt];
			\fill (c) circle[radius=3pt];
			\fill (d) circle[radius=3pt];
	
			\draw (a) -- (b) -- (c) -- (a);
			\draw (c) -- (d);

			\path[-] (b) edge  [in=195,out=255,loop] node {} ();
			\path[-] (c) edge  [in=60,out=120,loop] node {} ();
			\path[-] (d) edge  [in=60,out=120,loop] node {} ();
	
			\path[use as bounding box] (-3.232, -1.5) rectangle (3.5, 3.5);

		\end{tikzpicture}
		\caption{\PP}
	\end{subfigure}
	~
	\begin{subfigure}[t]{0.24\textwidth}
		\centering
		\begin{tikzpicture}[scale=0.5, baseline=0.36cm]

			\coordinate (a)  at ( -1.732, 2);
			\coordinate (b)  at ( -1.732, 0);
			\coordinate (c)  at ( 0, 1);
			\coordinate (d)  at ( 2, 1);
	
			\fill (a) circle[radius=3pt];
			\fill (b) circle[radius=3pt];
			\fill (c) circle[radius=3pt];
			\fill (d) circle[radius=3pt];
	
			\draw (a) -- (b) -- (c) -- (a);
			\draw (c) -- (d);
	
			\path[-] (a) edge  [in=165,out=105,loop] node {} ();
			\path[-] (b) edge  [in=195,out=255,loop] node {} ();
			\path[-] (c) edge  [in=60,out=120,loop] node {} ();
	
			\path[use as bounding box] (-3.232, -1.5) rectangle (3.5, 3.5);

		\end{tikzpicture}
		\caption{\PP}
	\end{subfigure}
	~
	\begin{subfigure}[t]{0.24\textwidth}
		\centering
		\begin{tikzpicture}[scale=0.5, baseline=0.36cm]

			\coordinate (a)  at ( -1.732, 2);
			\coordinate (b)  at ( -1.732, 0);
			\coordinate (c)  at ( 0, 1);
			\coordinate (d)  at ( 2, 1);
	
			\fill (a) circle[radius=3pt];
			\fill (b) circle[radius=3pt];
			\fill (c) circle[radius=3pt];
			\fill (d) circle[radius=3pt];
	
			\draw (a) -- (b) -- (c) -- (a);
			\draw (c) -- (d);
	
			\path[-] (a) edge  [in=165,out=105,loop] node {} ();
			\path[-] (b) edge  [in=195,out=255,loop] node {} ();
			\path[-] (d) edge  [in=60,out=120,loop] node {} ();
	
			\path[use as bounding box] (-3.232, -1.5) rectangle (3.5, 3.5);

		\end{tikzpicture}
		\caption{\NP-complete}
	\end{subfigure}
	~
	\begin{subfigure}[t]{0.24\textwidth}
		\centering
		\begin{tikzpicture}[scale=0.5, baseline=0.36cm]

			\coordinate (a)  at ( -1.732, 2);
			\coordinate (b)  at ( -1.732, 0);
			\coordinate (c)  at ( 0, 1);
			\coordinate (d)  at ( 2, 1);
	
			\fill (a) circle[radius=3pt];
			\fill (b) circle[radius=3pt];
			\fill (c) circle[radius=3pt];
			\fill (d) circle[radius=3pt];
	
			\draw (a) -- (b) -- (c) -- (a);
			\draw (c) -- (d);
	
			\path[-] (a) edge  [in=165,out=105,loop] node {} ();
			\path[-] (b) edge  [in=195,out=255,loop] node {} ();
			\path[-] (c) edge  [in=60,out=120,loop] node {} ();
			\path[-] (d) edge  [in=60,out=120,loop] node {} ();
	
			\path[use as bounding box] (-3.232, -1.5) rectangle (3.5, 3.5);

		\end{tikzpicture}
		\caption{\PP}
	\end{subfigure}

	\caption{All paws $H$ on four vertices.}
	\label{f-paws}
\end{figure*}

Suppose $H$ is the diamond. There are nine cases to consider (see also Figure~\ref{f-diamonds}).  If $H$ is irreflexive, then \shc{} is \NP-complete by Lemma~\ref{l-gps0} (as $H$ is non-bipartite as well). 
If $H$ is not loop-connected, then $H$ is 2-reflexive, and thus \shc{} is \NP-complete by Theorem~\ref{shc:the_theorem-gt2}. 
For the remaining seven target graphs, Vikas~\cite{Vi05} showed that $H$-{\sc Compaction} is in \PP. Hence, \shc{} is in \PP\ by Lemma~\ref{l-third}. We find that the theorem holds when $H$ is the diamond.

Suppose $H$ is the paw with vertices $x_1,x_2,y,z$ and edges $x_1x_2$, $x_1y$, $x_2y$ and $yz$ and possibly one or more loops.
There are twelve cases to consider (see also Figure~\ref{f-paws}).  If $H$ is irreflexive, then \shc{} is \NP-complete by Lemma~\ref{l-gps0} (as $H$ is non-bipartite as well). 
If $H$ is not loop-connected, then the set of reflexive vertices is formed by one or two vertices from $\{x_1,x_2\}$ and $z$.
Then \shc{} is \NP-complete by Theorem~\ref{shc:cor:the_theorem}. We are left with 
nine cases.
Vikas~\cite{Vi05} showed that  $H$-{\sc Compaction} is in \PP\ 
for all of these cases except for the case where $z$ is the only reflexive vertex.
Hence, for eight of these nine cases, \shc{} is in \PP\ by Lemma~\ref{l-third}. 

We are left to consider the case in which $z$ is the (only) reflexive vertex. Recall that we denote this target by $\mathrm{paw}^*$.
Theorem 3.5 of \cite{Vi05} proves that {\sc $\mathrm{paw}^*$-Compaction} is \NP-complete using a reduction from $C_3$-{\sc Retraction} (which is \NP-complete), but we will argue the proof works also for {\sc Surjective $\mathrm{paw}^*$-Colouring}.
It is shown that (i) a graph $G$ retracts to $C_3$ if and only if a certain graph $G'$ retracts to $\mathrm{paw}^*$ if and only if (iii) $G'$ compacts to $\mathrm{paw}^*$.
The salient part of the proof is Lemma~3.5.2 of~\cite{Vi05}, in which it is argued that (ii) and (iii) are equivalent. We note that if a graph retracts to another graph, then there exists a surjective homomorphism from the first graph to the second graph.
Hence, we need to verify only whether $G'$ retracts to $\mathrm{paw}^*$ should there exist a surjective homomorphism from $G'$ to $\mathrm{paw}^*$.
In the proof of Lemma~3.5.2 of~\cite{Vi05}, the properties of compaction are only used three times.
The first two are paragraph 2, line 2 and paragraph 7, line 4 (in the proof of Lemma 3.5.2).
The only property used of compaction on these two occasions is vertex surjection.
Finally, compaction is alluded to in the final paragraph of the proof, but here any homomorphism would have the desired property.
Thus, Vikas~\cite{Vi05} has actually proved that  $G'$ retracts to $\mathrm{paw}^*$ if and only if $G'$ has a surjective homomorphism to $\mathrm{paw}^*$, and it follows
that  {\sc Surjective $\mathrm{paw}^*$-Colouring} is \NP-complete. 

From the above we conclude that the theorem holds in all cases when $H$ is the paw. This completes the proof of 
Theorem~\ref{t-dom4}.
\end{proof}
Theorem~\ref{t-dom4} corresponds to Vikas' complexity classification of {\sc $H$-Compaction} for targets graphs~$H$ of at most four vertices.  Vikas~\cite{Vi05} showed that  {\sc $H$-Compaction} and {\sc $H$-Retraction} are polynomially equivalent for target graphs~$H$ of at most four vertices. 
Thus, we obtain the following corollary. 

\begin{corollary}\label{c:dom4}
Let $H$ be a 
graph on at most four vertices.
Then the three problems \shc{}, {\sc $H$-Compaction} and {\sc $H$-Retraction} 
are polynomially equivalent.
\end{corollary}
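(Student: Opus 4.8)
The plan is to reduce the statement to a comparison of three complexity dichotomies. For a fixed target graph $H$, each of {\sc Surjective $H$-Colouring}, {\sc $H$-Compaction} and {\sc $H$-Retraction} is either polynomial-time solvable or \NP-complete, and each is a nontrivial decision problem (no $H$ on at most four vertices renders any of them constantly true or constantly false). Two nontrivial problems that both lie in \PP\ are polynomially equivalent, and two \NP-complete problems are polynomially equivalent by definition. Hence it suffices to prove that for every $H$ with $|V_H|\le 4$ the three problems share the same complexity status: either all three lie in \PP\ or all three are \NP-complete.

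First I would record the reductions that hold for \emph{every} $H$, giving the chain $\text{\shc{}}\le_p\text{{\sc $H$-Compaction}}\le_p\text{{\sc $H$-Retraction}}$. The first reduction is the one underlying Lemma~\ref{l-third} (from~\cite{BKM12}), and the second is the reduction of Vikas~\cite{Vi04}. For $|V_H|\le 4$, Vikas~\cite{Vi05} additionally supplies the reverse reduction $\text{{\sc $H$-Retraction}}\le_p\text{{\sc $H$-Compaction}}$, so that {\sc $H$-Compaction} and {\sc $H$-Retraction} are already polynomially equivalent; this settles two of the three problems and reduces the task to matching {\sc Surjective $H$-Colouring} against {\sc $H$-Compaction}.

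From Lemma~\ref{l-third} we obtain for free that {\sc Surjective $H$-Colouring} is in \PP\ whenever {\sc $H$-Compaction} is, equivalently that every target for which {\sc Surjective $H$-Colouring} is \NP-complete is a target for which {\sc $H$-Compaction} is \NP-complete. The remaining step is the converse inclusion: every $H$ on at most four vertices for which {\sc $H$-Compaction} is \NP-complete must also make {\sc Surjective $H$-Colouring} \NP-complete. This direction cannot be read off from the generic chain, since that reduction runs the wrong way for hardness, so I would verify it by comparing, case by case over the cycles, complete graphs, diamonds and paws on four vertices, the explicit list of \NP-complete targets in Theorem~\ref{t-dom4} with Vikas' list for {\sc $H$-Compaction}. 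This is precisely the bookkeeping already carried out inside the proof of Theorem~\ref{t-dom4}: its polynomial cases are exactly those where Vikas places {\sc $H$-Compaction} in \PP\ (invoked through Lemma~\ref{l-third}), and its \NP-complete cases are exactly those where Vikas shows {\sc $H$-Compaction} \NP-complete.

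The main obstacle is this final matching, and in particular the single target $\mathrm{paw}^*$, where the generic direction is useless and one must re-derive \NP-hardness of {\sc Surjective $\mathrm{paw}^*$-Colouring} directly; Theorem~\ref{t-dom4} handles this by inspecting Vikas' \NP-hardness reduction for {\sc $\mathrm{paw}^*$-Compaction} and checking that it relies only on vertex-surjectivity rather than edge-surjectivity. Once the two lists of \NP-complete targets are seen to be identical, the reduction $\text{{\sc $H$-Compaction}}\le_p\text{\shc{}}$ follows for each $H$ on at most four vertices (both \NP-complete, or else both nontrivial and in \PP), and together with the chain and Vikas' equivalence this makes all three problems pairwise polynomially equivalent, establishing the corollary.
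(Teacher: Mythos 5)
Your proposal is correct and follows essentially the same route as the paper: Corollary~\ref{c:dom4} is obtained by observing that the \NP-complete targets in Theorem~\ref{t-dom4} coincide exactly with those in Vikas' classification of {\sc $H$-Compaction} (with the polynomial cases of Theorem~\ref{t-dom4} themselves derived from Vikas' compaction results via Lemma~\ref{l-third}), and then invoking Vikas' polynomial equivalence of {\sc $H$-Compaction} and {\sc $H$-Retraction} for $|V_H|\leq 4$, so that equal complexity status yields pairwise polynomial equivalence. Your extra bookkeeping (the dichotomy, the nontriviality remark, and the explicit chain of generic reductions) only spells out what the paper leaves implicit.
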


\end{document}